\tikzset{every picture/.style=semithick,
}
\tikzset{
    >=stealth,
    possible world/.style={circle,draw,thick,align=center},
    real world/.style={double,circle,draw,thick,align=center},
}
\begin{document}

\newcommand\modified[1]{#1}
\newcommand\bmodified[1]{#1}

\newcommand\frombaa[1]{\textcolor{red}{FROM BAA: #1}}
\newcommand\jyo[1]{\textcolor{blue}{Jyo: #1}}
\newcommand\chao[1]{\textcolor{blue}{Chao: #1}}
\newcommand\ufuk[1]{\textcolor{green}{Ufuk: #1}}
\newcommand\scott[1]{\textcolor{green}{Scott: #1}}
\newcommand\georgios[1]{\textcolor{brown}{Georgios: #1}}

\newcommand{\mypara}[1]{\vspace{0.5em} \noindent {\bf #1}.}
\newcommand{\myipara}[1]{\vspace{0.4em} \noindent {\em #1}.}

\newcommand{\lecps}{{\sc le}-{\sc cps}\xspace}
\newcommand{\lecpss}{{\sc le}-{\sc cps}\xspace}
\newcommand{\lecs}{{\sc lec}{\it s}\xspace}
\newcommand{\lec}{{\sc lec}\xspace}

\newcommand{\ignore}[1]{}

\newcommand\smin{\textcolor{red}{ins}}
\newcommand\smh{\textcolor{red}{hhs}}
\newcommand\smop{\textcolor{red}{ops}}
\newcommand\medin{\textcolor{red}{inm}}
\newcommand\medh{\textcolor{red}{hhm}}
\newcommand\medop{\textcolor{red}{opm}}
\newcommand\lgin{\textcolor{red}{inl}}
\newcommand\lgh{\textcolor{red}{hhl}}
\newcommand\lgop{\textcolor{red}{opl}}
\newcommand{\cur}{{\bf **}}
\newcommand{\current}{C}
\newcommand{\pending}{P}

\newcommand{\Phat}{\hat{P}}

\newcommand{\Paths}[2]{\mathit{Paths}^{#1}{(#2)}}
\newcommand{\fPaths}[2]{\mathit{Paths}^{#1}_{\mathit{fin}}{(#2)}}
\newcommand{\dom}{\mathit{dom}}
\newcommand{\dtmc}{\mathcal{M}}
\newcommand{\LTL}{\textsf{\small LTL}\xspace}
\newcommand{\CTL}{\textsf{\small CTL}\xspace}
\newcommand{\CTLstar}{\textsf{\small CTL$^*$}\xspace}
\newcommand{\PCTL}{\textsf{\small PCTL}\xspace}
\newcommand{\PCTLstar}{\textsf{\small PCTL$^*$}\xspace}
\newcommand{\HyperPCTL}{\textsf{\small HyperPCTL}\xspace}
\newcommand{\HyperLTL}{\textsf{\small HyperLTL}\xspace}
\newcommand{\HyperCTLstar}{\textsf{\small HyperCTL$^*$}\xspace}
\newcommand{\AFHyperLTL}{\mbox{AF-HyperLTL}\xspace}
\newcommand{\matching}{\mathcal{M}}
\newcommand{\topolgy}{\mathcal{T}}

\newcommand{\alphabet}{\mathrm{\Sigma}}
\newcommand{\states}{\mathrm{\Sigma}}
\newcommand{\statespace}{\states}
\newcommand{\Trace}{\mathsf{Traces}}
\newcommand{\trace}{t}
\newcommand{\qtrace}{\eta}
\newcommand{\sform}{\mathrm{\Phi}}
\newcommand{\pform}{\varphi}

\newcommand{\qed}{$~\blacksquare$}
\newcommand{\naturals}{\mathbb{N}_{>0}}
\newcommand{\naturalszero}{\mathbb{N}_{\geq 0}}

\newcommand{\AP}{\mathsf{AP}}

\newcommand{\Next}{\X}
\newcommand{\Finally}{\F}
\newcommand{\Globally}{\G}
\newcommand{\V}{\mathcal{V}}

\newcommand{\pr}{\mathbb{P}}
\renewcommand{\Pr}{\mathit{Pr}}

\newcommand{\emptyword}{\epsilon}

\newcommand{\init}{\mathit{init}}
\newcommand{\tpm}{\mathbf{P}}
\newcommand{\quant}{\mathbb{Q}}

\newcommand{\dbsim}{\mathit{dbSim}}
\newcommand{\res}{\mathit{res}}
\newcommand{\qout}{\mathit{qOut}}
\newcommand{\env}{\mathit{env}}
\newcommand{\fail}{\mathit{fail}}

\newcommand{\comp}[1]{\textsf{\small #1}}

\newcommand{\sigmakp}{$\mathsf{\Sigma^p_k}$\comp{-complete}\xspace}
\newcommand{\pikp}{$\mathsf{\Pi^p_k}$\comp{-complete}\xspace}

\newcommand\donotshow[1]{}

\newcommand{\ie}{i.e.\xspace}
\newcommand{\shield}{SHIELD\xspace}
\newcommand{\lestl}{\leccomp[{\sc stl}]}
\newcommand{\mulf}[1]{\multicolumn{2}{l}{#1}}

\newcommand{\z}{\cellcolor{black}}
\newcommand{\x}{\cellcolor{lightgray}}

\newcommand{\tru}{\mathsf{true}}
\newcommand{\false}{\mathsf{false}}
\newcommand{\inn}{\mathsf{in}}
\newcommand{\out}{\mathsf{out}}
\newcommand{\suffix}[2]{#1[#2,\infty]}
\newcommand{\F}{\LTLdiamond}
\newcommand{\G}{\LTLsquare}
\newcommand{\U}{\,\mathcal U\,}
\newcommand{\W}{\,\mathcal W\,}
\newcommand{\X}{\LTLcircle}

\newcommand{\States}{S}
\newcommand{\state}{s}
\newcommand{\trans}{\delta}
\newcommand{\inp}{\mathfrak{u}}
\newcommand{\uncont}{\mathfrak{u}}
\newcommand{\cont}{\mathfrak{c}}
\newcommand{\outp}{\mathcal{C}}
\newcommand{\kframe}{\mathcal{F}}
\newcommand{\krip}{\mathcal{P}}
\newcommand{\ktuple}{\langle S, s_\init, \cont, \uncont, L \rangle}
\newcommand{\ktupleprime}{\langle S', s'_\init, \cont', \uncont', L' \rangle}
\newcommand{\ktupleprimewithuc}{\langle S', s'_\init, \trans' \cup \inp, L' 
\rangle}
\newcommand{\lang}{\mathcal{L}}

\newcommand{\pos}{\mathit{pos}}
\newcommand{\negt}{\mathit{neg}}
\newcommand{\CS}[2]{\mbox{\sf \small CS[#1, #2{}]}\xspace}

\newcommand{\GMNI}{\textsf{\small GMNI}\xspace}
\newcommand{\GNI}{\textsf{\small GNI}\xspace}

\newtheorem{theorem}{Theorem}
\newtheorem{definition}{Definition}
\newtheorem{corollary}{Corollary}

\pgfdeclarelayer{background}
\pgfdeclarelayer{foreground}
\pgfsetlayers{background,main,foreground}

\tikzset{
  invisible/.style={opacity=0, text opacity=0},
  visible on/.style={alt={#1{}{invisible}}},
  alt/.code args={<#1>#2#3}{%
    \alt<#1>{\pgfkeysalso{#2}}{\pgfkeysalso{#3}} 
the path
  },
}

\newcommand\enrec[1]{%
  \tikz[baseline=(X.base)]
    \node (X) [draw, shape=circle, inner sep=0, fill=white] {\strut #1};}

\newcommand\encirclew[1]{%
  \tikz[baseline=(X.base)]
    \node (X) [draw, shape=circle, inner sep=0, fill=white] {\strut #1};}

\newcommand\encirclegr[1]{%
  \tikz[baseline=(X.base)]
    \node (X) [draw, shape=circle, inner sep=0, fill=gray] {\strut #1};}

 \newcommand\encirclep[1]{%
  \tikz[baseline=(X.base)]
    \node (X) [draw, shape=circle, inner sep=0, fill=pink] {\strut #1};}

\newcommand\encircle[1]{%
  \tikz[baseline=(X.base)]
    \node (X) [draw, shape=circle, inner sep=0, fill=green] {\strut #1};}

\newcommand\encircley[1]{%
  \tikz[baseline=(X.base)]
    \node (X) [draw, shape=circle, inner sep=0, fill=yellow] {\strut #1};}

\newcommand\encircler[1]{%
  \tikz[baseline=(X.base)]
    \node (X) [draw, shape=circle, inner sep=0, fill=red] {\strut #1};}

\tikzstyle{place}=[circle,thick,draw=blue!75,fill=blue!20,minimum size=6mm]
\tikzstyle{red place}=[place,draw=red!75,fill=red!20]
\tikzstyle{transition}=[rectangle,thick,draw=black, fill=black, 
minimum size=1mm]

\newcommand{\dec}{\mathtt{dec}}
\newcommand{\ses}{\mathtt{ses}}
\newcommand{\session}{\mathtt{session}}
\newcommand{\status}{\mathtt{status}}
\newcommand{\ntf}{\mathtt{ntf}}

\definecolor{mGreen}{rgb}{0,0.6,0}
\definecolor{mGray}{rgb}{0.5,0.5,0.5}
\definecolor{mPurple}{rgb}{0.58,0,0.82}
\definecolor{backgroundColour}{rgb}{0.95,0.95,0.92}

\lstdefinestyle{CStyle}{
    backgroundcolor=\color{backgroundColour},   
    commentstyle=\color{mGreen},
    keywordstyle=\color{magenta},
    numberstyle=\tiny\color{mGray},
    stringstyle=\color{mPurple},
    basicstyle=\footnotesize,
    breakatwhitespace=false,         
    breaklines=true,                 
    captionpos=b,                    
    keepspaces=true,                 
    numbers=left,                    
    numbersep=2pt,                  
    showspaces=false,                
    showstringspaces=false,
    showtabs=false,                  
    tabsize=2,
    language=C
}


\newcommand{\eab}[1]{{\color{red}#1}}

\renewcommand{\topfraction}{0.96}
\renewcommand{\bottomfraction}{0.95}
\renewcommand{\textfraction}{0.1}
\renewcommand{\floatpagefraction}{1}
\renewcommand{\dbltopfraction}{.97}
\renewcommand{\dblfloatpagefraction}{.99}

\title{Controller Synthesis for Hyperproperties\thanks{This research has been 
partially supported by the United States NSF SaTC Award 1813388.}}
%
%
%

\author{
\IEEEauthorblockN{Borzoo Bonakdarpour}
\IEEEauthorblockA{Department of Computer Science and Engineering\\
Michigan State University, USA\\
Email: \sf{borzoo@msu.edu}}
\and
\IEEEauthorblockN{Bernd Finkbeiner}
\IEEEauthorblockA{Reactive Systems Group\\
Saarland University, Germany\\
Email: \sf{finkbeiner@cs.uni-saarland.de}}
}

\maketitle

\begin{abstract}

We investigate the problem of {\em controller synthesis} for 
{\em hyperproperties} specified in the temporal logic HyperLTL. Hyperproperties 
are system properties that relate multiple execution traces. Hyperproperties 
can elegantly express information-flow policies like noninterference and 
observational determinism. The controller synthesis problem is to 
automatically design a controller for a {\em plant} that ensures satisfaction 
of a given specification in the presence of the environment or adversarial 
actions.
%
%
We show that the controller synthesis problem is decidable for HyperLTL 
specifications and finite-state plants. We provide a rigorous complexity 
analysis for different fragments of HyperLTL and different system types: 
tree-shaped, acyclic, and general graphs.

\end{abstract} 

\section{Introduction}
\label{sec:intro}

In {\em program synthesis}, an algorithm automatically constructs a 
program that satisfies a given high-level specification given in some formal 
logic~\cite{ec82,mw84}. The {\em controller synthesis}~\cite{rw89} problem
is a specific form of program synthesis where we ask whether the 
\emph{controllable transitions} of a given system, called the {\em plant}, can 
be selected in such a way that the resulting restricted system satisfies the 
given specification. Algorithms for this problem automate the design of a 
controller that ensures the satisfaction of the specification even in the 
presence of an {\em adversary} modeled by {\em uncontrollable transitions}. 
Synthesis guarantees correctness by construction and it enables users to 
refrain from the error-prone process of developing software and, instead, to focus on only analyzing the functional behavior of the system. Thus, program synthesis 
exhibits its power particularly in automating the generation of intricate and 
complex parts of a system.

A particular area where synthesis can play an important role is the 
construction of systems where the flow of information is critical, for example 
to ensure {\em confidentiality}. This is because even a short transient 
violation of {\em security} or {\em privacy} policies may result in leaking 
private or highly sensitive information, compromising safety, or interrupting 
vital public or social services. The hacking of the emergency system in the city 
of Dallas~\cite{dallas}, the {\em Heartbleed} error that leaked records of 4.5 
million patients~\cite{heartbleed}, the data leak at Yahoo, which resulted in 
stealing 500 million accounts~\cite{yahoo}, and the {\em Goto Fail} bug, where 
the encryption of more than 300 million devices was broken~\cite{goto}, only scratch 
the surface of high-profile examples of such security breaches. 
Synthesis with respect to security policies is of 
particular interest as it can construct protocols that are guaranteed to behave correctly in the 
presence of adversarial attacks or untrusted parties. Also, given a set 
of actions and primitives of parties that participate in a protocol, synthesis can be used to synthesize trusted third parties that mediate 
between other parties (see Section~\ref{sec:example} for a concrete example of 
an application of controller synthesis to non-repudiation protocols).

In order to express and reason about information-flow security policies, we use 
the powerful formalism of {\em hyperproperties}~\cite{cs10}. Hyperproperties 
elevate trace properties from a set of execution traces to sets of sets of 
execution traces. {\em Temporal 
logics for hyperproperties} such as HyperLTL~\cite{cfkmrs14} have been introduced to give 
clear syntax and semantics to hyperproperties. HyperLTL allows for the simultaneous quantification over the temporal behavior of \emph{multiple} 
execution traces. Atomic propositions are indexed to refer to specific traces. For example, 
\emph{noninterference}~\cite{gm82}
between a secret input $h$ and a public output $o$ can be specified in HyperLTL 
by stating that, for all pairs of traces $\pi$ and $\pi'$,  if the input is the 
same for all input variables $I$ except $h$, then the output $o$ must be the 
same at all times:
\[
\forall\pi.\forall\pi'.~ \G \big(\!\!\!\bigwedge_{i\in I\setminus \{h\}}\! i_\pi 
= i_{\pi'}\big) ~\Rightarrow~ \G\, (o_\pi = o_{\pi'})
\]
Another prominent example is \emph{generalized noninterference} 
(GNI)~\cite{m88}, which can be expressed as the following HyperLTL formula:
\[
\forall\pi.\forall\pi'.\exists\pi''.~\G\, (h_\pi = h_{\pi''}) ~\wedge~ \G\, 
(o_{\pi'} = o_{\pi''})
\]
The existential quantifier is needed to allow for nondeterminism. Generalized 
noninterference permits nondeterminism in the low-observable behavior, but 
stipulates that low-security outputs may not be altered by the injection of 
high-security inputs.

\begin{table*}[t]
\def\arraystretch{2.5}
\centering
\scalebox{1}
{\newcolumntype{K}[1]{>{\centering\arraybackslash}p{#1}}
\begin{tabular}{|K{2cm}||K{3cm}||K{4cm}||K{5cm}|}
\hline

{\bf HyperLTL fragment} & {\bf Tree} & {\bf Acyclic} & {\bf General}\\

\hline\hline
$\mbox{E}^*$ &
\multirow{2}{*}{\parbox[c]{2cm}{\centering \comp{L-complete}\\ 
{\em (Theorem~\ref{thrm:sys-tree-ea})}}} & 
{\parbox[c]{2cm}{\centering \comp{NL-complete} \\
{\em (Theorem~\ref{thrm:sys-acyc-e})}}}  &
{\parbox[c]{2cm}{\centering \comp{NL-complete}\\ {\em 
(Theorem~\ref{thm:sys-general-e}) }}} \\

\cline{1-1}\cline{3-3}\cline{4-4}

$\mbox{E}^*\mbox{A}$ &   & \parbox[c]{2cm}{\centering \comp{$\mathsf{\Sigma}_{2}^p$}\\ {\em 
(Theorem~\ref{thrm:sys-acyc-ea}) }} & \multirow{2}{*}
{\parbox[c]{3cm}{\centering \comp{PSPACE-complete}\\ {\em 
(Theorem~\ref{thrm:system-general-EAk})}}}\\
\cline{1-3}
%
%
%
$\mbox{AE}^*$ & {\parbox[c]{2cm}{\centering
\comp{P-complete} \\ {\em (Theorem~\ref{thrm:sys-tree-ae})}}}
&  
{\parbox[c]{2cm}{\centering \comp{$\mathsf{\Sigma}_{2}^p$-complete}\\ {\em 
(Theorem~\ref{thrm:system-acyc-EAk1}) }}} 
& \\
\hline
%
%
$\mbox{A}\mbox{A}^+$ & 
\multirow{4}{*}{{\parbox[c]{2cm}{\centering \comp{NP-complete} \\ {\em 
(Corollary~\ref{cor:sys-tree-aa})} }}} &
{{\parbox[c]{2cm}{\centering \comp{NP-complete} \\ {\em 
(Theorem~\ref{thrm:sys-acyc-a})} }}} &
{{\parbox[c]{2cm}{\centering \comp{NP-complete} \\ {\em 
(Theorem~\ref{thm:sys-general-a})} }}}  \\
\cline{1-1} \cline{3-4}
%
%
$(\mbox{E}^*\mbox{A}^*)^k$, $k \geq 2$ &   &
{\parbox[c]{2cm}{\centering \comp{$\mathsf{\Sigma}_{k}^p$-complete}\\ {\em 
(Theorem~\ref{thrm:system-acyc-EAk1}) }}} 
 & 
\multirow{2}{*}
{\parbox[c]{4cm}{\centering \comp{$(k{-}1)$-EXPSPACE-complete}\\ {\em 
(Theorem~\ref{thrm:system-general-EAk})}}}\\
\cline{1-1} \cline{3-3}
%
%
$(\mbox{A}^*\mbox{E}^*)^k$, $k \geq 1$ &  & 
{\parbox[c]{3cm}{\centering \comp{$\mathsf{\Sigma}_{k+1}^p$-complete}\\ {\em 
(Theorem~\ref{thrm:system-acyc-EAk1}) }}} 
 & \\
\cline{1-1} \cline{3-4}
%
%
$(\mbox{A}^*\mbox{E}^*)^*$ &  & 
{\parbox[c]{3cm}{\centering \comp{PSPACE} \\ {\em 
(Corollary~\ref{cor:sys-acyclic-hltl}) }}} & 
{\parbox[c]{2cm}{\centering \comp{NONELEMENTARY}\\ {\em 
(Corollary~\ref{cor:sys-general-hltl})}}} \\
\hline

\end{tabular}
}
\vspace{2mm}
\caption{Complexity of the HyperLTL controller synthesis problem in the size of 
the plant, where $k$ is the number of quantifier alternations in 
the formula.}
\label{tab:system}
\end{table*}

Techniques for automatically constructing systems that satisfy a given set of information-flow properties 
are still in their infancy. The general synthesis problem for HyperLTL is known to be 
undecidable as soon as the formula contains two 
universal quantifiers~\cite{DBLP:journals/acta/FinkbeinerHLST20}. To remedy this problem, {\em bounded 
synthesis}~\cite{DBLP:journals/acta/FinkbeinerHLST20,cfst19} restricts the search to implementations up 
to a given bound on the number of states.  Bounded synthesis is decidable, but still very difficult, because the transition structure  of the controller must be found. Bounded synthesis has been successfully applied to examples such as the dining cryptographers\cite{c85}, but does not seem to scale to large systems.
Another prominent approach is {\em 
program repair}~\cite{bf19}, where, for a given program that does not satisfy a 
HyperLTL property, transitions are eliminated until the program satisfies the property. Program repair has the advantage that the repair directly works on the state space of the given program. However, program repair is not reactive in the sense that it  
does not allow for modeling the actions of an adversarial environment, which is vital to dealing with information-flow security.

This limitation is addressed by \emph{controller synthesis}. Controller synthesis is based on a \emph{plant}, which describes the system behavior in terms of controllable and uncontrollable transitions.
Like program repair, controller synthesis has the advantage that the state
space is already given. Unlike program repair, controller synthesis distinguishes between controllable and uncontrollable transitions and therefore considers an adversary.
The synthesized controller guarantees that, no matter 
how the adversary behaves, the 
specification is satisfied.

In this paper, we study the controller synthesis problem of 
finite-state systems with respect to HyperLTL specifications. We provide a 
detailed analysis of the complexity of the controller synthesis problem for different 
fragments of HyperLTL and different shapes of the plants motivated by 
the following observations:

\begin{itemize}

\item The  security and privacy policies of interest vary in the quantifier structure that is needed to express the policy in HyperLTL. Typical examples are $\forall\forall$ (noninterference), $\forall\forall\exists$ (generalized noninterference), and $\forall\exists$ (noninference)~\cite{cfkmrs14}.
Data minimization, a popular 
privacy technique, is of the form $\forall\forall\exists\exists$~\cite{sssb19}. 
The non-repudiation protocol discussed in Section~\ref{sec:example} is specified with a HyperLTL formula of the form $\exists\forall\forall\forall$.

\item The protocols and systems of interest vary in the shape of their state 
graphs. We are interested in {\em general}, {\em acyclic}, and {\em 
tree-shaped} plants. The need for investigating the controller synthesis 
problem for tree-shaped and acyclic plants stems from two reasons.
First, many 
trace logs are in the form of a simple linear collection of the traces seen so 
far. Or, for space efficiency, the traces are organized by common prefixes and 
assembled into a tree-shaped graphs, or by common prefixes as well as suffixes 
assembled into an acyclic graphs. These example scenarios can be used to 
synthesize protocols, as has been done for synthesizing distributed 
algorithms~\cite{at17} that respect certain safety and liveness constraints. In 
the context of security/privacy, the example scenarios would be used to 
synthesize a complete protocol that satisfy a set of hyperproperties. The 
second reason is that, tree-shaped and acyclic graphs often occur as the 
natural representation of the state space of a protocol. For example, certain 
security protocols, such as non-repudiation, authentication, and session-based 
protocols (e.g., TLS, SSL, SIP) go through a finite sequence of \emph{phases}, 
resulting in an acyclic plant A detailed example of synthesizing a 
tree-shaped non-repudiation protocol is presented in 
Section~\ref{sec:example}.

\end{itemize}
We investigate the difficulty of the controller 
synthesis problem with respect to different combinations of the quantifier structure of the 
HyperLTL formula and the shape of the plant.

Table~\ref{tab:system} summarizes the results of this paper. The complexities are in the size 
of the plant. This \emph{system complexity} is the most relevant complexity in 
practice, because the system tends to be much larger than the specification. 
Our results show that the shape of the plant plays a crucial role in the 
complexity of the controller synthesis problem.

\begin{itemize}
\item {\bf Trees.} \ For trees, the complexity in the size of the plant does 
not go beyond \comp{NP}. The problem for the 
existential alternation-free fragment and the fragment with one quantifier 
alternation where the leading quantifier is existential is \comp{L-complete}. 
The problem for the fragment with one quantifier alternation where the leading 
quantifier is universal is \comp{P-complete}. However, the problem becomes 
\comp{NP-complete} as soon as there are two leading universal quantifiers. 

\item {\bf Acyclic graphs.} \ For acyclic plants, the complexity is 
\comp{NL-complete} for the existential fragment. Similar 
to tree-shaped graphs, the problem becomes \comp{NP-complete} as soon 
as there are two leading universal quantifiers. Furthermore, the complexity is 
in the level of the polynomial hierarchy that corresponds to the number of 
quantifier alternations.

\item {\bf General graphs.} \ For general plants, the complexity is 
\comp{NL-complete} for the existential fragment and \comp{NP-complete} for the 
universal fragment. The complexity is \comp{PSPACE-complete} for the 
fragment with one quantifier alternation and $(k-1)$-\comp{EXPSPACE-complete} 
in the number $k$ of quantifier alternations.

\end{itemize}
Surprisingly, the complexities identified in Table~\ref{tab:system} are very 
much aligned with those reported for the program repair problem~\cite{bf19}. 
The main exceptions are the complexity for the universal 
alternation-free fragment in tree-shaped and acyclic graphs that are 
\comp{NP-complete} in the case of controller synthesis and are 
\comp{L-complete} and \comp{NL-complete}, respectively, in 
case of the repair problem. This fragment is of particular interest, as it 
hosts many of the important information-flow security policies.   

We believe that the results of this paper provide the fundamental understanding 
of the controller synthesis problem for secure information flow and pave the way for 
further research on developing efficient and scalable techniques.

\paragraph*{Organization} The remainder of this paper is organized as follows.
In Section~\ref{sec:prelim}, we review HyperLTL. We present a detailed 
motivating example in Section~\ref{sec:example}. The formal statement of the 
controller synthesis problem is in Section~\ref{sec:problem}.
Section~\ref{sec:tree} presents our results on the complexity of 
controller synthesis for HyperLTL in the size of tree-shaped plants. 
Sections~\ref{sec:acyclic} and~\ref{sec:general} present the results on the 
complexity of synthesis in acyclic and general graphs, respectively. We discuss 
related work in Section~\ref{sec:related} and conclude with a discussion of 
future work in Section~\ref{sec:conclusion}.

\section{Preliminaries}
\label{sec:prelim}


\subsection{Plants}
\label{subsec:krip}

Let $\AP$ be a finite set of {\em atomic propositions} and $\alphabet = 
2^{\AP}$ be the {\em alphabet}. A {\em letter} is an element of $\alphabet$. A 
\emph{trace} $t\in \Sigma^\omega$ over alphabet $\alphabet$ is an infinite 
sequence of letters: $t = t(0)t(1)t(2) \ldots$


\begin{definition}
\label{def:kripke}
A {\em plant} is a tuple $\krip = \ktuple$, where 

\begin{itemize}
 \item $\States$ is a finite set of {\em states};
 
 \item $\state_{\init} \in \States$ is the {\em initial state};
 
 \item $\cont,\uncont \subseteq \States \times \States$ are respectively sets 
of  of {\em controllable} and {\em uncontrollable transitions}, where $\cont 
\cap \uncont = \{\}$, and  

 \item $L: S \rightarrow \statespace$ is a {\em labeling function} on the 
states of $\krip$.
\end{itemize}
We require that for each $\state \in \States$, there exists $\state' \in 
\States$, such that $(\state, \state') \in \cont \cup \uncont$.

\end{definition}

Figure~\ref{fig:kripke} shows an example plant, where transition 
$(s_\init, s_1)$ is an uncontrollable transition while the rest are 
controllable, and $L(s_{\init})= \{a\}, L(s_3)=\{b\}$, etc. The 
\emph{size} of the plant is the number of 
its states. The directed graph $\kframe = \langle \States, \cont \cup \uncont 
\rangle$ is 
called the {\em frame} of the plant $\krip$. A {\em loop} in 
$\kframe$ is a finite sequence $\state_0\state_1\cdots \state_n$, such that 
$(\state_i, \state_{i+1}) \in \cont \cup \uncont$, for all $0 \leq i < n$, and 
$(\state_n, \state_0) \in \cont \cup \uncont$. We call a frame {\em acyclic}, 
if the 
only loops are self-loops on terminal states, i.e., on states that 
have no other outgoing transition.

\begin{figure}[t]
\centering
\scalebox{0.8}{
\begin{tikzpicture}[-,>=stealth',shorten >=.5pt,auto,node distance=2cm, 
semithick, initial text={}]

\node[initial, state] [text width=1em, text centered, minimum 
  height=2.5em](0) {\hspace*{-1.25mm}$\{a\}$};

\node [above left = 0.005 cm and 0.1 cm of 0](label){$s_{\init}$};

\node[state, above right=of 0][text width=1em, text centered, minimum 
  height=2.5em] (1) {\hspace*{-1.25mm}$\{a\}$};

\node [above left = 0.005 cm and 0.1 cm of 1](label){$s_{1}$};

\node[state, right=of 1][text width=1em, text centered, minimum 
height=2.5em] (2) {\hspace*{-1.25mm}$\{b\}$};

\node [above left = 0.005 cm and 0.1 cm of 2](label){$s_{2}$};

\node[state, right=of 0][text width=1em, text centered, minimum 
height=2.5em] (3) {\hspace*{-1.25mm}$\{b\}$};

\node [above left = 0.005 cm and 0.1 cm of 3](label){$s_{3}$};

\draw[->, dashed]   
   (0) edge node (01 label) {} (1);

\draw[->]   
  (0) edge node (03 label) {} (3)
  (1) edge node (12 label) {} (3)
  (1) edge node (12 label) {} (2)
  (2) edge [loop right] node (22 label) {} (2)
  (3) edge [loop right] node (33 label) {} (3);
    
\end{tikzpicture}
}
\caption {An acyclic plant.}
\label{fig:kripke}
\end{figure}
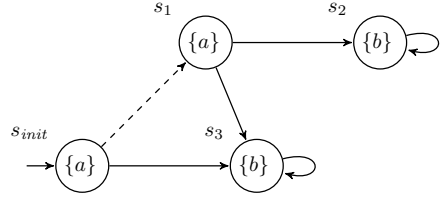

We call a frame \emph{tree-shaped}, or, in short, a \emph{tree}, if 
every state $\state$ has a unique state $\state'$ with $(\state', \state) \in 
\cont \cup \uncont$, except for the root node, which has no predecessor, and 
the leaf nodes, which, again because of Definition~\ref{def:kripke}, 
additionally have a self-loop but no other outgoing transitions. In some cases, 
the system at hand is given as a tree-shaped or acyclic plant. 
Examples include session-based security protocols and space-efficient 
execution logs, because trees allow us to organize the traces according to 
common prefixes and acyclic graphs according to both common prefixes and common 
suffixes.

A \emph{path} of a plant is an infinite sequence of states
$\state(0)\state(1)\cdots \in \States^\omega$, such that:

\begin{itemize}
 \item $\state(0) = \state_\init$, and
\item $(\state(i), \state({i+1})) \in \cont \cup \uncont$, for all $i \geq 0$. 
\end{itemize}
A trace of a plant is a trace $t(0)t(1)t(2) \cdots \in 
\alphabet^\omega$, such that there exists a path $\state(0)\state(1)\cdots \in 
\States^\omega$ with $t(i) = L(\state(i))$ for all $i\geq 0$. We denote by 
$\Trace(\krip)$ the set of all traces of $\krip$ with paths that start 
in state $\state_\init$.

\subsection{The Temporal Logic HyperLTL}
\label{subsec:hltl}

HyperLTL~\cite{cfkmrs14} is an extension of linear-time temporal logic (LTL) for hyperproperties.
The syntax of HyperLTL formulas is defined inductively by the following grammar:
\begin{equation*}
\begin{aligned}
& \varphi ::= \exists \pi . \varphi \mid \forall \pi. \varphi \mid \phi \\
& \phi ::= \tru \mid a_\pi \mid \lnot \phi \mid \phi \vee \phi \mid \phi \
\U \, \phi \mid \X \phi
    \end{aligned}
\end{equation*}
where $a \in \AP$ is an atomic proposition and $\pi$ is a trace variable from 
an infinite supply of variables $\V$. The Boolean connectives $\neg$ and 
$\vee$ have the usual meaning, $\U$ is the temporal \emph{until} operator and 
$\X$ is the temporal \emph{next} operator. We also consider the usual derived 
Boolean connectives, such as $\wedge$, $\Rightarrow$, and $\Leftrightarrow$, 
and the derived temporal operators \emph{eventually} $\F\varphi\equiv 
\tru\,\U\varphi$ and \emph{globally} $\G\varphi\equiv\neg\F\neg\varphi$.
The quantified formulas $\exists \pi$ and $\forall \pi$ are read as `along some 
trace $\pi$' and `along all traces $\pi$', respectively. For example, the 
following formula:
$$
\forall \pi.\forall \pi'. \G(a_\pi \Leftrightarrow a_{\pi'})
$$
intends to express that every pair of traces should always agree on the 
position of proposition $a$.

The semantics of HyperLTL is 
defined with respect to a trace assignment, a partial mapping~$\Pi \colon \V 
\rightarrow \alphabet^\omega$. The assignment with empty domain is denoted by 
$\Pi_\emptyset$. Given a trace assignment~$\Pi$, a trace variable~$\pi$, and 
a 
concrete trace~$t \in \alphabet^\omega$, we denote by $\Pi[\pi \rightarrow 
t]$ 
the assignment that coincides with $\Pi$ everywhere but at $\pi$, which is 
mapped to trace $t$. Furthermore, $\suffix{\Pi}{j}$ denotes the assignment 
mapping each trace~$\pi$ in $\Pi$'s domain to 
$\Pi(\pi)(j)\Pi(\pi)(j+1)\Pi(\pi)(j+2) \cdots $.
The satisfaction of a HyperLTL formula $\varphi$ over a trace assignment 
$\Pi$ and a set of traces $T \subseteq \alphabet^\omega$, denoted by $T,\Pi 
\models \varphi$, is defined as follows:
\[
\begin{array}{l@{\hspace{2em}}c@{\hspace{2em}}l}
  T, \Pi \models \tru\\
  T, \Pi \models a_\pi & \text{iff} & a \in \Pi(\pi)(0),\\
  T, \Pi \models \neg \psi & \text{iff} & T, \Pi \not\models \psi,\\
  T, \Pi \models \psi_1 \vee \psi_2 & \text{iff} & T, \Pi \models \psi_1 
\text{ 
or } T, \Pi \models \psi_2,\\
  T, \Pi \models \X \psi & \mbox{iff} & T,\suffix{\Pi}{1} \models \psi,\\
  T, \Pi \models \psi_1 \U \psi_2 & \text{iff} &  \exists i \ge 0: 
T,\suffix{\Pi}{\modified{i}} \models \psi_2 \ \wedge\ \\
& & \forall j \in [0, i): T,\suffix{\Pi}{j} \models \psi_1,\\
  T, \Pi \models \exists \pi.\ \modified{\psi} & \text{iff} & \exists t \in T: T,\Pi[\pi 
\rightarrow t] \models \psi,\\
  T, \Pi \models \forall \pi.\ \modified{\psi} & \text{iff} & \forall t \in T: T,\Pi[\pi 
\rightarrow t] \models \psi.
  \end{array}
\]
We say that a set $T$ of traces satisfies a sentence~$\varphi$, denoted by $T 
\models \phi$, if $T, \Pi_\emptyset \models \varphi$. If the set $T = 
\Trace(\krip)$ is generated by a plant $\krip$, we write $\krip 
\models \varphi$.


\section{Motivating Example}
\label{sec:example}

\newcommand\Act{\mathit{Act}}
\newcommand\Obs{\mathit{Obs}}
\newcommand\NRR{\mathit{NRR}}
\newcommand\NRO{\mathit{NRO}}

In order to motivate the decision problem described in 
Section~\ref{sec:problem}, we consider the application of controller synthesis 
to construct a \emph{trusted third party} for a \emph{fair non-repudiation 
protocol}. The purpose of a non-repudiation protocol is to allow two parties to 
exchange a message without any party being able to deny having participated in 
the exchange. For this purpose, the recipient of the message obtains a 
\emph{non-repudiation of origin} (NRO) evidence and the sender of the message 
obtains a \emph{non-repudiation of receipt} (NRR) evidence. The protocol is 
\emph{effective} if it is possible to successfully transmit the message to the 
recipient and the evidence to both parties. The protocol is \emph{fair} if it is 
furthermore \emph{impossible} for one party to obtain the evidence without the 
other party \emph{also} receiving the evidence.  
Fairness in an effective protocol cannot be obtained without an external trusted 
agent, called the \emph{trusted third party}, which mediates the message 
exchange and ensures the delivery of the evidence. We now consider the problem 
of synthesizing such a trusted third party from the specification of fairness 
and effectiveness.

Let $A$ be the sender of the message, $B$ be the receiver, and $T$ the trusted 
third party. $A$ has 5 possible actions, corresponding to sending the message 
$m$ or the NRO to either $B$ or to $T$, or, alternatively, doing nothing at all: 
\begin{align*}
\Act_A=\{ & A{\rightarrow} B{:}m, A{\rightarrow} T{:}m,\\
& A{\rightarrow} B{:}\NRO, A {\rightarrow} T{:}\NRO, A{:}\mathit{skip}\}.
\end{align*}
Likewise, $B$ can send the NRR to $A$ or $T$, or do nothing at all: 
$$\Act_B=\{B{\rightarrow} A{:} \NRR, B {\rightarrow} T{:} \NRR, 
B{:}\mathit{skip}\}.
$$
$T$ can send the NRR to $A$, the NRO or $m$ to $B$, or 
do nothing at all:
$$
\Act_T=\{T{\rightarrow} A{:}\NRR, T{\rightarrow} B{:}\NRO, 
T{\rightarrow} B{:}m, T{:}\mathit{skip}\}.
$$
We'll assume that the three parties take turns and interact in a fixed number of 
rounds. The plant, thus, is {\em tree-shaped}, where along each branch the 
states belong first to $A$, then to $T$, then to $B$, and then again $A$, then 
$T$, then $B$, etc. States that belong to $A$ branch according to the actions 
in $\Act_A$, likewise states that belong to $T$ branch according to $\Act_T$, 
and states that belong to $B$ according to  $\Act_B$. We label the states with 
the atomic propositions
$$
\AP=\{m, \NRR, \NRO\} \, \cup \, \Act_A \cup \Act_T \cup 
\Act_B,
$$
where $m$ indicates that $B$ has received the message, $\NRO$ that 
$B$ has received the NRO, $\NRR$ that $A$ has received the NRR, and one of the 
actions in $\Act_A \cup \Act_T \cup \Act_B$ that the respective action has just 
occurred. Since we are interested in synthesizing the trusted third party, the 
outgoing transitions of states belonging to $T$ are controllable, all other 
transitions are uncontrollable. That is, we are only allowed to manipulate the 
behavior of the trusted third party and not the original participants in the 
protocol.

We specify \emph{effectiveness} by requiring that there is a sequence of actions 
$\pi$ such that the message, the NRR, and the NRO get received. For 
\emph{fairness}, we additionally require that if either $A$ executes the actions 
according to $\pi$ (and $B$ behaves arbitrarily) or $B$ executes $\pi$ (and $A$ 
behaves arbitrarily), then it must still hold that the NRR gets received if and 
only if the NRO gets received.
{\small
\[
\begin{array}{ll} \varphi = \exists \pi. \forall \pi'.\ (\LTLdiamond m_\pi) 
\wedge (\LTLdiamond \NRR_\pi) \wedge (\LTLdiamond \NRO_{\pi}) &\\
\multicolumn{2}{r}{\mbox{\normalsize (effectiveness)}}\\
\wedge\, \Big((\LTLsquare \bigwedge_{a \in \Act_A} a_\pi \Leftrightarrow 
a_{\pi'}) \Rightarrow \big((\LTLdiamond \NRR_{\pi'}) \Leftrightarrow 
(\LTLdiamond \NRO_{\pi'})\big)\Big)\\
\multicolumn{2}{r}{\mbox{\normalsize (fairness for $A$)}} \\
\wedge\, \Big((\LTLsquare \bigwedge_{a \in \Act_B} a_\pi \Leftrightarrow 
a_{\pi'}) \Rightarrow \big((\LTLdiamond \NRR_{\pi'}) \Leftrightarrow 
(\LTLdiamond \NRO_{\pi'})\big)\Big)\\
\multicolumn{2}{r}{\mbox{\normalsize (fairness for $B$)}} \\
\end{array}
\]
}

The following trusted third party is a correct solution to the controller 
synthesis problem:

\noindent
$T_{\mbox{correct}}:$\smallskip\\
\mbox{\qquad} (1) skip until $A{:}m {\rightarrow} T$;\\
\mbox{\qquad} (2) skip until $A{:}\NRO {\rightarrow} T$;\\
\mbox{\qquad} (3) $T{\rightarrow} B{:}m$;\\
\mbox{\qquad} (4) skip until $B {\rightarrow} T{:}\NRR$;\\
\mbox{\qquad} (5) $T {\rightarrow} B{:}\NRO$;\\
\mbox{\qquad} (6) $T {\rightarrow} A{:}\NRR$.

An \emph{incorrect} solution to the controller synthesis problem would, for 
example, be a trusted third party that does not wait for the NRR from $B$ before 
forwarding the NRO from $A$ to $B$: now $B$ could quit the protocol without ever 
providing the NRR.

\noindent
$T_{\mbox{incorrect}}:$\smallskip\\
\mbox{\qquad} (1) skip until $A{:}m {\rightarrow} T$;\\
\mbox{\qquad} (2) skip until $A{:}\NRO {\rightarrow} T$;\\
\mbox{\qquad} (3) $T{\rightarrow} B{:}m$;\\
\mbox{\qquad} (4) $T {\rightarrow} B{:}\NRO$;\\
\mbox{\qquad} (5) skip until $B {\rightarrow} T{:}\NRR$;\\
\mbox{\qquad} (6) $T {\rightarrow} A{:}\NRR$.
  
$T_{\mbox{incorrect}}$ violates the fairness requirement for $A$.
      Note, however, that $\varphi$ also admits the following, somewhat 
counterintuitive, solution $T_{\mbox{strange}}$:

\noindent
$T_{\mbox{strange}}:$\smallskip\\
\mbox{\qquad} (1) skip until $A{:}m {\rightarrow} B$;\\
\mbox{\qquad} (2) $T {\rightarrow} B{:}\NRO$;\\
\mbox{\qquad} (3) $T {\rightarrow} A{:}\NRR$.    

In this solution, $T$ transmits the NRO and NRR even though the message $m$ was 
sent directly from $A$ to $B$ \emph{without} ever going through $T$. The reason, 
why $T_{\mbox{strange}}$ can do this, is that it can choose its actions based on 
complete information, i.e., based on the position in the tree. If we wish to 
restrict the possible solutions for $T$ to only those that are based only on the 
messages actually received by $T$, we need to add a consistency condition for 
incomplete information:
\[
\forall \pi. \forall \pi'.\ \Big(\LTLsquare \bigwedge_{o \in \Obs_T} o_\pi 
\leftrightarrow o_{\pi'}\Big) \, \Rightarrow \, \Big(\LTLsquare  \bigwedge_{a 
\in \Act_T} a_\pi \Leftrightarrow a_{\pi'}\Big)
\]
where
$$
\Obs_T = \{ A{\rightarrow} T{:}m, A{\rightarrow} T{:}\NRO, B{\rightarrow} 
T{:}\NRR \}
$$
consists of the actions that send a message to $T$.
  
Constructing a trusted third party for fair non-repudiation thus requires 
solving a controller synthesis problem for a tree-shaped plant. For 
effectiveness and fairness, a HyperLTL formula with quantifier prefix $\exists 
\pi. \forall \pi'$  is required, for consistency under incomplete information, 
additionally a HyperLTL formula with quantifier prefix $\forall \pi. \forall 
\pi'$.

\section{Problem Statement}
\label{sec:problem}

The {\em controller synthesis problem} is the following decision problem. Let 
$\krip = \ktuple$ be a plant and $\varphi$ be a closed HyperLTL 
formula, where $\krip$ may or may not satisfy $\varphi$. Does there exist a 
plant $\krip' = \ktupleprime$ such that:


\begin{itemize}
 \item $\States' = \States$,
 \item $s'_\init = s_\init$,
 \item $\cont' \subseteq \cont$,
 \item $\uncont' = \uncont$,
 \item $L' = L$, and
 \item $\krip' \models \varphi$?
\end{itemize}
In other words, the goal of the controller synthesis problem is to identify a 
plant $\krip'$, whose set of traces is a subset of the traces of 
$\krip$ that satisfies $\varphi$, by only restricting the controllable 
transitions and without removing any uncontrollable transitions. Note that 
since the witness to the decision problem is a plant, following 
Definition~\ref{def:kripke}, it is 
implicitly implied that in $\krip'$, for every state $s \in \States'$, there 
exists a state $s'$ such that $(s, s') \in \cont' \cup \uncont'$. I.e., the synthesis does not create a {\em deadlock} state. 

We use the following notation to distinguish the different 
variations of the problem:
\begin{center}
 \CS{\sf Fragment}{\mbox{\sf Frame Type}},
\end{center}
where

\begin{itemize}

 \item \comp{CS} is the {\em controller synthesis} decision problem as 
described above;
 
\item \comp{Fragment} is one of the following for $\varphi$:

\begin{itemize}

 \item We use regular expressions to denote the order and pattern of repetition 
of quantifiers. For example, \comp{E$^*$A$^*$-HyperLTL} denotes the fragment, 
where an arbitrary (possibly zero) number of existential quantifiers is 
followed by an arbitrary (possibly zero) number of universal quantifiers. Also, 
\comp{$\mbox{A}\mbox{E}^+$-HyperLTL} means a lead universal quantifier followed 
by one or more existential quantifiers.
%
\comp{E$^{\leq 1}$A$^*$-HyperLTL} denotes the fragment, where zero or one 
existential quantifier is followed by an arbitrary number of universal 
quantifiers.

\item \comp{(EA)$k$-HyperLTL}, for $k\geq 0$, denotes the fragment with $k$ 
alternations and a lead existential quantifier, where $k=0$ means an 
alternation-free formula with only existential quantifiers;

\item \comp{(AE)$k$-HyperLTL}, for $k\geq 0$, denotes the fragment with $k$ 
alternations and a lead universal quantifier, where $k=0$ means an 
alternation-free formula with only universal quantifiers,

\item \comp{HyperLTL} is the full logic HyperLTL, and

\end{itemize} 

\item \comp{Frame Type} is either \comp{tree}, \comp{acyclic}, or 
\comp{general}.

\end{itemize}

\section{Complexity of Controller Synthesis for Tree-shaped Graphs}
\label{sec:tree}

We begin by analyzing the complexity of the controller synthesis problem 
for tree-shaped plants.

\subsection{The \comp{E$^*$A} Fragment}

Our first result is that the controller synthesis problem for tree-shaped
plants can be solved in logarithmic time in the size of the plant 
for the fragment with only one quantifier alternation, where the 
leading quantifier is existential and there is only one universal 
quantifier. This fragment is the least expensive to deal with in tree-shaped 
plants and, interestingly, the complexity is the same as for the model 
checking~\cite{bf18} and model repair problems~\cite{bf19}.

\begin{theorem}
  \label{thrm:sys-tree-ea}
\CS{E$^*$A-HyperLTL}{\mbox{tree}} is \comp{L-complete} in the size of the 
plant.
\end{theorem}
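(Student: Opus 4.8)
The plan is to establish membership in \comp{L} by a careful analysis of what it means for a tree-shaped plant to be controllable for an $\mathrm{E}^*\mathrm{A}$-HyperLTL formula, and then to obtain \comp{L}-hardness by a reduction from a standard \comp{L}-complete problem (e.g.\ reachability in forests / the order problem on a tree, or directed reachability restricted to trees). For the upper bound, first observe that fixing the existentially quantified traces $\pi_1,\dots,\pi_n$ amounts to choosing $n$ paths through the tree, and — crucially — in a tree a path is determined by its final vertex (more precisely, by the terminal self-loop state it reaches), so each existential witness can be guessed as a single state, i.e.\ in logarithmic space. Once these witness paths are fixed, the remaining obligation is a single universally quantified LTL-like condition that must hold for \emph{all} traces of the controlled plant $\krip'$; since $\varphi$ is fixed (not part of the input), the body $\phi$ is of constant size, and the set of controllable transitions that can safely be removed can be characterized locally.

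The key steps, in order, are: (1) Reduce the problem to choosing, for each existential quantifier, a leaf/terminal state — this is the only unbounded part of the witness and costs $O(\log|S|)$ space; note a subtlety is ensuring the chosen controller $\cont'$ is consistent across all the existential paths simultaneously and leaves no deadlock, which in a tree means only that along each chosen branch the used transitions are retained. (2) Argue that, after fixing the existential witnesses, the controller can be taken to be maximally permissive on all branches not touched by the witnesses, except that we must delete exactly those controllable transitions whose subtree would generate a trace violating the universal body; because the formula is fixed, checking whether a given branch's trace, paired with the $n$ fixed witness traces, satisfies the constant-size body $\phi$ is an LTL model-checking question over a constant number of eventually-periodic traces, decidable in logspace (indeed, in a tree each trace is ultimately a self-loop, so it is eventually constant). (3) Combine: cycle over all tuples of candidate terminal states for the existential witnesses (logspace counter), and for each, verify in logspace that deleting the "bad" controllable transitions still leaves every state with a successor and still keeps the witness paths alive; accept if some tuple works. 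Finally, (4) for hardness, encode an instance of a canonical \comp{L}-complete problem as a tree-shaped plant plus a fixed $\mathrm{E}^*\mathrm{A}$ formula (for instance, a pure $\mathrm{E}$ formula asserting existence of a path reaching a distinguished label suffices to get \comp{L}-hardness, matching the model-checking result cited as \cite{bf18}).

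I expect the main obstacle to be step (2): precisely characterizing which controllable transitions must be removed, and showing this characterization is both sound (the resulting $\krip'$ genuinely satisfies $\varphi$, including the no-deadlock requirement of Definition~\ref{def:kripke}) and computable in logspace. The delicate point is that removing a controllable transition to kill a bad universal trace may be forced, yet might disconnect a state or destroy one of the existential witness paths; one must show that in a tree these conflicts either do not arise or can be detected locally, so that a logspace procedure can decide feasibility without materializing $\krip'$. The no-deadlock condition is what makes this more than a triviality — it is exactly the feature that later pushes the $\mathrm{AE}^*$ and $\mathrm{AA}^+$ fragments up to \comp{P} and \comp{NP} — so I would be careful to keep the argument tight and tailored to the restricted shape of the quantifier prefix (a single $\mathrm{A}$), where I expect the local analysis to go through cleanly.
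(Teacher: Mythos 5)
Your overall skeleton matches the paper's: enumerate the existential witnesses by their terminal (self-loop) states with logarithmic counters, treat the single universal quantifier as a pruning condition on the remaining branches, and obtain \comp{L}-hardness from reachability in directed line graphs (ORD) with a fixed one-quantifier formula. However, the step you yourself flag as the main obstacle --- step (2) --- is where the proof actually lives, and the rule you propose there is wrong. ``Delete exactly those controllable transitions whose subtree would generate a violating trace'' rejects feasible instances: take a root with a single controllable edge to a node $a$, which has controllable edges to a good leaf $b$ and a bad leaf $c$. The correct controller deletes only $(a,c)$; your rule deletes the root's edge to $a$ (its subtree contains the bad leaf $c$), leaving the root with no successor, so step (3) reports a deadlock and rejects. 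Whether a transition must be removed cannot be decided by looking at whether its subtree \emph{currently} contains a bad trace; it depends on whether that subtree can \emph{recursively} be controlled.

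The missing idea is the recursive characterization and its logspace evaluation: a leaf evaluates positively iff its trace (paired with the fixed witnesses) satisfies the body; an internal node all of whose outgoing transitions are controllable evaluates positively iff \emph{some} child does (keep that child, drop the rest --- no deadlock since one successor remains); a node with at least one uncontrollable outgoing transition evaluates positively iff \emph{all} its uncontrollable children do (the controllable ones may be dropped, the uncontrollable ones cannot), with witness-path children additionally forced to be kept. The controller exists iff the root evaluates positively. This is an AND--OR evaluation of the explicitly given tree, which the paper carries out bottom-up storing only a single node pointer, hence in \comp{L}. Without this recursion your procedure is sound but incomplete, so as written the upper bound does not go through; the rest of your argument (witness enumeration, the constant-size eventually-periodic trace check, and the hardness reduction) is fine.
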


\begin{proof}
We first show membership to \comp{L}. We note that the number of 
traces in a tree is bounded by the number of states, i.e., the size of the 
plant. The synthesis algorithm enumerates (using the complete plant) all 
possible assignments for the existential trace quantifiers. For each 
existential trace variable, we need a counter up to the number of traces, which 
requires only a logarithmic number of bits in the size of the plant.

For the universal quantifier, the algorithm first checks if assigning one of the 
traces that already have been assigned to the existential quantifiers now to the 
universal quantifier will violate the formula. If so, the existential assignment 
is disregarded. Otherwise, the algorithm proceeds to check if the traces that 
must implicitly be present (because of uncontrollable transitions or because of 
potential deadlocks) satisfy the formula. For this purpose, the algorithm 
evaluates the nodes of the tree bottom-up, such that a controller exits iff the 
root node evaluates positively.

Let $n$ be a node in the tree whose children are all leaves. If all outgoing transitions are controllable, then $n$ evaluates positively iff the assigning the trace of one the children to the universally quantified variable satisfies the formula. If at least one of the transitions is uncontrollable, then $n$ evaluates positively iff all such assignments satisfy the formula.

Now, let $m$ be a node further upward in the tree. If all outgoing transitions 
are controllable, then the node evaluates to true iff some child evaluates to 
true. We evaluate bottom-up the first child by moving to the first leaf node 
reachable from the first child. If the evaluation is negative, we move to the 
first leaf node reachable from the second child etc. If the evaluation of one of 
the children is positive we proceed to $m$'s parent with a positive evaluation. 
If we have reached the last child with a negative evaluation, we proceed to the 
parent with a negative evaluation.

If there is at least one uncontrollable transition, we disregard the controllable transitions and only step through the uncontrollable transitions, initially by moving to the first leaf reachable through the first uncontrollable transition.
If the evaluation is positive, we move to the first leave node reachable from the second child etc. If the evaluation of one of the children is negative we proceed to the parent with a negative evaluation. If we have reached the last child with a positive evaluation, we proceed to $m$'s parent with a positive evaluation.

The algorithm terminates when the root node has been evaluated. 
During the entire bottom-up traversal, we only need to store a pointer to a 
single node of the tree in memory, which can be done with logarithmically many 
bits.

In order to show completeness, we prove that the controller synthesis problem 
for the existential fragment is \comp{L-hard}. The \comp{L}-hardness for 
\CS{E$^*$-HyperLTL}{\mbox{tree}} follows from the \comp{L-hardness} of 
ORD~\cite{et97}. ORD is the graph-reachability problem for directed 
line graphs. Graph reachability \linebreak from $s$ to $t$ can be checked
with the synthesis problems for $\exists \pi .\ \F (s_\pi \wedge \F t_\pi)$ or
$\forall \pi .\ \F (s_\pi \wedge \F t_\pi)$.
\end{proof}

\subsection{The \comp{AE}$^*$ Fragment}


We now study the complexity of the controller synthesis problem for the 
fragment with only one quantifier alternation, where the leading quantifier is 
universal.

\begin{theorem}
  \label{thrm:sys-tree-ae}
\CS{AE$^*$-HyperLTL}{\mbox{tree}} is \comp{P-complete} in the size of the 
plant.
\end{theorem}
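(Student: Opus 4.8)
The plan is to establish both that $\CS{AE$^*$-HyperLTL}{\mbox{tree}}$ lies in \comp{P} and that it is \comp{P-hard} in the size of the plant, for a fixed formula $\varphi=\forall\pi_1.\exists\pi_2.\cdots\exists\pi_m.\phi$ with $\phi$ quantifier-free.

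\emph{Membership.} In a tree the traces correspond bijectively to the leaves, so a controller amounts to choosing a set $X$ of leaves that is \emph{realizable}, i.e., is the leaf set of some pruning that cuts only controllable edges, keeps every uncontrollable one, and leaves no deadlock. A short argument shows $X$ is realizable iff no uncontrollable edge $(s,s')$ has a retained leaf below $s$ but none below $s'$; this condition is closed under union, so every leaf set $Z$ has a unique largest realizable subset, computable in polynomial time by iteratively deleting offending subtrees. Since $\varphi$ has the form $\forall\,\exists^{m-1}$, the restricted plant with leaf set $X$ satisfies $\varphi$ exactly when $X$ is \emph{self-justifying}: every $\ell\in X$ admits a witness tuple $(\ell_2,\dots,\ell_m)\in X^{m-1}$ with $\phi(\ell,\ell_2,\dots,\ell_m)$ --- which, $m$ being fixed, is checkable in polynomial time on the $m$ ultimately periodic traces involved (for $m=1$ this just says every retained leaf satisfies $\phi$). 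The algorithm then computes the largest realizable self-justifying set by a doubly nested greatest fixpoint: starting from all leaves, it alternately (a) deletes every leaf with no witness tuple in the current set and (b) passes to the largest realizable subset, stopping when a round makes no change; this terminates after at most $|\States|$ rounds and runs in polynomial time, and a controller exists iff the resulting set is nonempty.

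\emph{Correctness --- the main obstacle.} Step (a) can destroy realizability and step (b) can destroy self-justification, so the alternation is unavoidable and its correctness is the delicate part. I would argue, in Knaster--Tarski fashion, that (i) the set reached is simultaneously realizable and self-justifying, hence, when nonempty, it is the leaf set of a plant $\krip'$ with $\krip'\models\varphi$; and (ii) since ``having a witness tuple in $W$'' is monotone in $W$ and the largest-realizable-subset operator is monotone, every realizable self-justifying leaf set is contained in each intermediate set and therefore in the final one. Together these yield ``a controller exists iff the computed set is nonempty''.

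\emph{Hardness.} I would reduce from the Monotone Circuit Value Problem, which is \comp{P-complete}. Given a monotone circuit with gates $g_1,\dots,g_n$ in topological order --- each a constant $0$ or $1$, or a binary $\wedge$ or $\vee$ of earlier gates --- with output $g_n$, build a tree whose root branches into $n$ paths, where path $i$ records, in unary at fixed time slots via atomic propositions, its index $i$ (proposition $\mathit{self}$), the indices of $g_i$'s two inputs ($\mathit{in1}$, $\mathit{in2}$), and $g_i$'s type ($\mathit{const1}$, $\mathit{isOr}$, $\mathit{isAnd}$). Make the root edges into the $\mathit{const1}$ paths and into the $g_n$ path uncontrollable and the remaining root edges controllable; path-internal edges cannot be cut without creating a deadlock. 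The fixed \comp{AE$^*$} formula
\[
\forall\pi.\exists\pi'.\exists\pi''.\ \big(\mathit{const1}_\pi \vee (\mathit{isOr}_\pi \wedge \F(\mathit{in1}_\pi \wedge \mathit{self}_{\pi'})) \vee (\mathit{isAnd}_\pi \wedge \F(\mathit{in1}_\pi \wedge \mathit{self}_{\pi'}) \wedge \F(\mathit{in2}_\pi \wedge \mathit{self}_{\pi''}))\big)
\]
asserts that every retained gate is justified by retained input gates. A valid controller --- a set of ``on'' gates containing every constant-$1$ gate and $g_n$ in which every on gate is justified --- exists iff $g_n$ evaluates to $1$: the set of true gates witnesses one direction, and a topological-order induction (a constant-$0$ gate is never justifiable) gives the other. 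The reduction is log-space, the plant has size $O(n^2)$, and the formula is fixed and in the \comp{AE$^*$} fragment; the point to verify with care is precisely this correspondence together with the alignment of the time slots, so that $\F(\mathit{in1}_\pi \wedge \mathit{self}_{\pi'})$ holds exactly when $\pi'$ is the first input gate of $\pi$.
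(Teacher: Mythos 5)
Your membership argument follows the same route as the paper's: an iterated marking (greatest-fixpoint) computation over the leaves of the tree, with each round removing leaves that lack existential witnesses among the currently retained leaves. Your version is in fact more careful than the paper's on exactly the point the paper glosses over: the paper merely remarks that the answer is negative ``in case the only way to satisfy the formula is to remove an uncontrollable transition,'' whereas you make this precise by interleaving the witness-filter with a closure operator onto the largest \emph{realizable} leaf set (realizability being closed under union), and you correctly observe that both operators are monotone, so the fixpoint contains every realizable self-justifying set. That is a clean and complete justification of the \comp{P} upper bound. For hardness, you take a genuinely different route: the paper reduces from HORN-SAT (after normalizing clauses to two negative and one positive literal and encoding variable indices in binary along each branch), while you reduce from the Monotone Circuit Value Problem with a unary index encoding, one branch per gate, and uncontrollable root edges forcing the constant-$1$ gates and the output gate to be retained. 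The two reductions have the same shape---``every retained branch must be justified by other retained branches,'' enforced by a fixed $\forall\exists^*$ formula---and your use of uncontrollable edges to pin down the obligatory branches matches the paper's use of uncontrollable edges from the $v_x$ states. Your unary encoding gives a quadratic-size plant versus the paper's logarithmic-depth branches, which is immaterial for a log-space reduction.

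There is one concrete error in the reduction as written: the disjunct for OR gates, $\mathit{isOr}_\pi \wedge \F(\mathit{in1}_\pi \wedge \mathit{self}_{\pi'})$, only allows an OR gate to be justified by its \emph{first} input. Consider $g_1 = 0$, $g_2 = 1$, $g_3 = g_1 \vee g_2$ with output $g_3$: the circuit evaluates to $1$, but $g_3$'s branch (uncontrollable, hence retained) can only be justified by retaining $g_1$'s branch, and a constant-$0$ branch is never justifiable, so your instance has no controller. The forward direction of the correspondence therefore fails. The fix is immediate---replace the OR disjunct by $\mathit{isOr}_\pi \wedge \F\bigl((\mathit{in1}_\pi \vee \mathit{in2}_\pi) \wedge \mathit{self}_{\pi'}\bigr)$ (or add the symmetric disjunct for $\mathit{in2}$)---and with that change both directions of your topological-order argument go through. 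You should also state explicitly that the terminal self-loop states of each branch carry no propositions, so that the $\F$-subformulas can only be witnessed at the intended time slots.
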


\begin{proof}
We show membership to \comp{P} with a simple marking algorithm. Let $\varphi = 
\forall \pi_1. \exists \pi_2 .\ \psi$. We begin by marking all leaves. We then 
proceed in several rounds, such that in each round, at least one mark is 
removed. We, hence, terminate within linearly many rounds in the size of the 
tree-shaped plant.
  
In each round, we go through all marked leaves $v_1$ and instantiate $\pi_1$ 
with the trace leading to $v_1$. We then again go through all marked leaves 
$v_2$ and instantiate $\pi_2$ with the trace leading to $v_2$, and check $\psi$ 
on the pair of traces, which can be done in linear time~\cite{bf18}. If the 
check is successful for some instantiation of $\pi_2$, we leave $v_1$ marked, 
otherwise we remove the mark. If no mark was removed by the end of the round, 
we terminate. Each round of the marking algorithm takes linear time in the size 
of the tree, the complete algorithm thus takes quadratic time. Once the marking 
algorithm has terminated, we remove all branches of the tree that are not 
marked. As established by the final round of the marking algorithm, the 
remaining tree satisfies $\varphi$. For additional existential quantifiers, we 
go in each round through the possible instantiations of all the existential 
quantifiers, which can be done in polynomial time. In case we reach a 
situation, where the only way to satisfy the formula is to remove an 
uncontrollable transition, then the answer to the synthesis problem is negative.

For the lower bound, we reduce HORN-SAT, which is \comp{P-complete}, to the 
synthesis problem for \comp{AE$^*$} formulas. HORN-SAT is the following problem:

\begin{quote}

Let $X=\{x_1,x_2, \ldots, x_n\}$ be a set of propositional variables. A 
{\em Horn clause} is a clause over $X$ with at most one positive literal.
Is $y = y_1 \wedge y_2 \wedge \cdots \wedge y_m$, where each $y_j$ is a Horn 
clause for all $j \in [1,m]$, satisfiable? That is, does there exist an 
assignment of truth values to the variables in $X$, such that all clauses of 
$y$ evaluate to true?

\end{quote}
In the following, we work with a modified version of HORN-SAT, where every 
clause consists of two negative and one positive literals. In order to 
transform any arbitrary Horn formula $y$ to another one $y'$ that consists of 
two negative and one positive literals, we apply the following:

\begin{itemize}
\item  To ensure that every clause contains a positive literal, we introduce a 
fresh variable $\bot$ with the intended meaning ``false''. We add $\bot$ as a 
positive literal to all clauses that have no positive literal.

\item To ensure that every clause contains at least two negative literals, we 
introduce a fresh variable $\top$ with the intended meaning ``true''. We add 
$\top$ as a negative literal to all clauses that have no negative literals. 
(Clauses with only one negative literal count as clauses with two negative 
literals with two identical negative literals.)

\item To ensure that no clause contains more than two negative literals, we 
reduce the number of negative literals as follows: Let $l_1$ and $l_2$ be two 
negative literals in a clause with more than two negative literals. We 
introduce a fresh variable $f$ and replace $l_1$ and $l_2$ with $\neg f$; we 
furthermore add $\{l_1, l_2, f\}$ as a new clause.
  
\item In order to account for the intended meaning of $\top$ and $\bot$, we 
modify the HORN-SAT problem to check if there exists a truth assignment to the 
variables in $X \cup \{\top,\bot\}$ (union fresh variables $f$ to break 
clauses as described above), such that all clauses in $y$ evaluate to 
true \emph{and} $\bot$ evaluates to false and $\top$ evaluates to true.

\end{itemize}
For example, we transform Horn formula:
\begin{align*}
y = (\neg x_1 \vee \neg x_2 \vee \neg x_3 \vee x_4) \, \wedge \, 
(\neg x_2 \vee x_4) \, \wedge \, (\neg x_1)
\end{align*}
to the following:
\begin{align*}
y' = & (\neg x_1 \vee \neg x_2 \vee f) \, \wedge \\
& (\neg x_3 \vee \neg f \vee x_4) \, \wedge \, (\neg x_2 \vee \neg 
x_2 \vee x_4) \, \wedge \, (\neg x_1 \vee \neg x_1   \vee \bot )
\end{align*}
Hence, the set of propositional variables for the transformed formula $y'$ is 
updated to $X = \{\bot, x_1, x_2, x_3, x_4, f, \top\}$. Since the 
modified 
problem and the original problem are obviously equivalent, it follows that the 
modified problem is \comp{P-complete} as well. We now describe our mapping (see 
Fig.~\ref{fig:system-tree-ae-app} for an example).

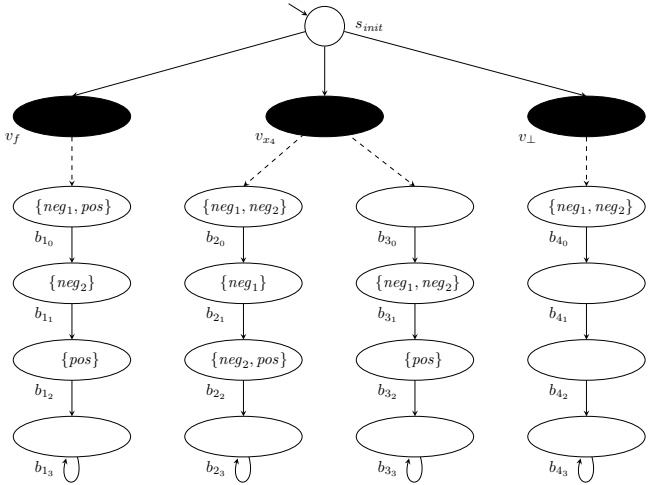
\begin{figure}[t]
\centering
\scalebox{.6}{
 \begin{tikzpicture}
 \tikzset{
    >=stealth,
    auto,
    possible world/.style={circle,draw,thick,align=center},
    real world/.style={double,circle,draw,thick,align=center},
    minimum size=25pt
}

\coordinate (init) at (0, 0);

\node[draw,circle,text width=0.5cm,fill=white] (initstate) at ($ (init) + (3, 
1) $) {};

\draw[->] ($ (initstate) + (-.8,.5) $ ) -- (initstate);


\draw [fill=black, align=center] ($ (init) + (3, -1)$) ellipse (1.3cm and 
.45cm) node [color=white](x4) {};

\draw [fill=black, align=center] ($ (init) + (8.8, -1)$) ellipse (1.3cm and 
.45cm) node [color=white](bot) {};

\draw [fill=black, align=center] ($ (init) + (-2.6, -1)$) ellipse (1.3cm and 
.45cm) node [color=white](f) {};

\node (i) at (initstate)[xshift=1cm] {$s_\init$};

\node (b) at (f) [xshift=-1.3cm,yshift=-5mm] {$v_{f}$};
\node (b) at (x4)  [xshift=-1.3cm,yshift=-5mm] {$v_{x_4}$};
\node (b) at (bot)[xshift=-1.3cm,yshift=-5mm] {$v_{\bot}$};

\foreach \i in {0,1,2,3}{
\foreach \j in {0,1,2,3}{
\draw [fill=white, align=center] ($ (init) + (-2.6, -2) + ( \i*3.8,{(\j*-1.7) 
-1} ) $) ellipse (1.3cm and .45cm) node (v\i\j)
{\ifthenelse{\i=1 \AND \j=1}{$\{\negt_1\}$}{
\ifthenelse{\i=0 \AND \j=1}{$\{\negt_2\}$}{
\ifthenelse{\i=3 \AND \j=2}{}{
\ifthenelse{\i=1 \AND \j=2}{$\{\negt_2,\pos\}$}{
\ifthenelse{\(\i=0 \AND \j=2\) \OR \(\i=2 \AND \j=2\)}{$\{\pos\}$}{
\ifthenelse{\i=3 \AND \j=0}{\hspace{-3mm}$\{\negt_1,\negt_2\}$}{
\ifthenelse{\i=3 \AND \j=3}{}{
\ifthenelse{\i=2 \AND \j=1}{$\{\negt_1, \negt_2\}$}{}
\ifthenelse{\i=1 \AND \j=0}{\hspace{-8mm}$\{\negt_1, \negt_2\}$}{
\ifthenelse{\i=3 \AND \j=1}{}{
\ifthenelse{\i=0 \AND \j=0}{\hspace{-8mm}$\{\negt_1,\pos\}$}{}
}
}}}}}}}
}};

}
}

\draw [->] (initstate) -- (f.north);
\draw [->] (initstate) -- (bot.north);
\draw [->] (initstate) -- (x4.north);

\draw [->, dashed] (x4) -- (v10.north);
\draw [->, dashed] (x4) -- (v20.north);
\draw [->, dashed] (f) -- (v00.north);
\draw [->, dashed] (bot) -- (v30.north);


\foreach \i in {0,1,2,3}{
\foreach \j/\n in {0/1,1/2,2/3}{
\draw [->] (v\i\j) -- (v\i\n);

}}
 

\foreach \i in {0,1,2,3}{
\foreach \j/\n in {0,1,2,3}{

\pgfmathsetmacro\ii{int(\i+1)};
\node (b) at (v\i\j)[xshift=-6mm,yshift=-7mm] 
{$b_{\ii_\j}$};

}}
 

\foreach \i in {0,1,2,3} {
\path (v\i3) edge [loop below] (v\i3);
}

\end{tikzpicture}
 }
 \caption{The plant for Horn formula $y = (\neg x_1 \vee \neg 
x_2 \vee \neg x_3 \vee x_4) \, \wedge \, (\neg x_2 \vee x_4) \, \wedge \, (\neg 
x_1)$.
}
 \label{fig:system-tree-ae-app}
\end{figure}

\noindent \textbf{Plant.} \ We translate the (modified) HORN-SAT 
problem to a tree-shaped plant $\krip = \ktuple$ as follows. Our general idea 
is to create a tree, where each branch represents a clause in the input
HORN-SAT problem and states of each branch are labeled according to the 
bitstring encoding of the literals in that clause. Thus, the length of each 
branch needs to be $\log(|X|)$. We now present the details:

\begin{itemize}

\item {\em (Atomic propositions $\AP$)} We include atomic propositions 
$\negt_1$ and $\negt_2$ to indicate negative literals and $\pos$ for the 
positive literals in a Horn clause. Thus,
$$\AP = \{\negt_1, \negt_2, \pos\}.$$

\item {\em (Set of states $\States$)} We now identify the members of $\States$:

\begin{itemize}
\item First, we include an initial state $\state_\init$, which is labeled with 
the empty set of atomic propositions.

\item For each propositional variable $x \in X$ that appears as a positive 
literal in a clause, we add a state $v_x$. The idea here is to 
ensure that if a positive literal appears on some clause in the synthesized 
plant, then all clauses with the same positive literal must be preserved during 
synthesis. These state are not labeled by any atomic propositions in $\AP$. 

\item For each clause $y_j$, where $j \in [1, m]$, we include a bitstring that 
represents which literals participate in $y_j$. That is, we include the 
following in $S$:
$$\Big\{b_{j_i} \mid j \in [1,m] \wedge i \in [0, \log(|X|))\Big\}.$$

We represent literals in each clause by labeling the states of the clause 
according to the appearance of the propositional variables 
$x_i \in X$ (i.e., the updated set including $\top$, $\bot$, and fresh $f$ 
variables), where $i \in [0, \log(|X|))$, as a negative or positive literal. 
More specifically, let $y_j = \{\neg x_{n_1} \vee \neg x_{n_2} \vee 
x_{p}\}$ be a Horn clause. We label states 
$b_{j_0}, b_{j_1}, \ldots, b_{j_{\log(X)-1}}$ by atomic proposition $\negt_1$
according to the bitsequence of ${n_1}$, atomic proposition $\negt_2$ 
according to the bitsequence of ${n_2}$, and atomic proposition $\pos$ 
according to the bitsequence of ${p}$. We reserve values $0$ and $|X|-1$ for 
$\bot$ and $\top$, respectively (see Fig~\ref{fig:system-tree-ae-app}).

\end{itemize}

\item {\em (Uncontrollable transitions $\uncont$)} For each state $v_x$, we add 
an uncontrollable transition from $v_x$ to any state $b_{j_0}$, where 
propositional variable $x$ appears as a positive literal in clause $y_j$. These 
transitions ensure that if a variable that appears a positive literal in two 
or more clauses, all or none of the associated clauses are preserved.

\item {\em (Controllable transitions $\cont$)} We represent each Horn clause 
as a branch of the plant. That is, we include the following transitions:

\begin{itemize}
 \item We connect the states that represent bitstrings as follows:
$$\Big\{(b_{j_i}, b_{j_{i+1}}) \mid j \in [1,m] \, \wedge i \in [0, 
\log(X))\Big\}.$$

\item We also connect the initial state to each $v_x$ state, where $x$ is a 
propositional variable that appears as a positive literal in some Horn clause.

\item Finally, we add a self-loop to the end of each branch, that is,
$$\Big\{ (b_{j_{\log(X)-1}}, b_{j_{\log(X)-1}}) \mid j \in [1,m] \Big\}.$$
\end{itemize}

\end{itemize}

It is easy to see that the plant that represents the Horn clauses is a tree. It 
branches into the paths that represent the clauses 
(see Fig.~\ref{fig:system-tree-ae-app}).

\noindent \textbf{HyperLTL formula.} \ We interpret the synthesized plant 
as a solution to HORN-SAT assigning false to every variable $x$ that appears as 
a positive literal on some path but $v_x$ is {\em not} reachable from the 
initial state and true to every variable $x'$ that appears as a positive 
literal on some path but $v_{x'}$ {\em is} reachable from the initial state.
We define a HyperLTL formula that ensures that this valuation satisfies the 
clause set. Let
$$\varphi_{\mathsf{map}} = \varphi_\bot \wedge \varphi_\top \wedge \varphi_C$$
be a HyperLTL formula with the following conjuncts:

\begin{itemize}

\item Formula $\varphi_\top$ enforces that $\top$ is assigned to true. This is 
expressed by requiring that, on all traces, $\top$ does not appear as a 
positive literal. That is, 
$$\varphi_\top = \forall \pi_1. \F(\neg\pos_{\pi_1}).$$

\item Formula $\varphi_\bot$ stipulates that $\bot$ is assigned to 
false. This is expressed by requiring that there exists a trace where $\bot$ 
appears as a positive literal. That is,
$$\varphi_\bot = \exists \pi_2. \G(\neg \pos_{\pi_2}).$$

\item Formula $\varphi_C$ ensures that all clauses are satisfied. This is 
expressed as a forall-exists formula that requires, for every trace in the 
synthesized plant, that for one of the variables that appear as 
negative literals in the clause, there must exist a trace where the same 
variable appears as the positive literal. That is,
\begin{align*}
\varphi_C = \forall \pi_1.\exists \pi_2. 
\G\Big(& (\negt_{1_{\pi_1}} \Leftrightarrow \pos_{\pi_2}) \, \vee \\ 
& (\negt_{2_{\pi_1}} \Leftrightarrow \pos_{\pi_2}) \Big).
\end{align*}

\end{itemize}
Overall, $\varphi_{\mathsf{map}}$ needs only one universal and one existential 
quantifier and it is straightforward to see that the input HORN-SAT formula is 
satisfiable if and only if the answer to the controller synthesis problem is 
affirmative.
\end{proof}

\subsection{The Full Logic}

We now turn to full HyperLTL. We show that the controller synthesis 
problem is in \comp{NP}. \comp{NP}-hardness holds already for the fragment with two universal quantifiers.

\begin{theorem}
\label{thrm:sys-tree-aae-upper}
\CS{HyperLTL}{\mbox{tree}} is in \comp{NP} in the size of the plant.
\end{theorem}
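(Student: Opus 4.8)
The plan is to show that a nondeterministic polynomial-time algorithm can solve \CS{HyperLTL}{\mbox{tree}}. The key observation is that in a tree-shaped plant, the number of maximal branches (hence the number of traces) is linear in the size of the plant, so any sub-plant is described by a choice of which controllable transitions to keep. First I would have the algorithm nondeterministically guess the witness plant $\krip'$ by guessing, for each state with controllable outgoing transitions, which subset to retain; this is a polynomial amount of data. The algorithm then checks that $\krip'$ is still a legal plant (no deadlocks: every state has a successor), that $\uncont' = \uncont$ and $\cont' \subseteq \cont$ — all trivially polynomial — and finally that $\krip' \models \varphi$.

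The remaining task is to verify $\krip' \models \varphi$ in polynomial time for a fixed HyperLTL sentence $\varphi$ over a tree-shaped plant. Here I would invoke the fact that HyperLTL model checking over a finite set of traces can be done by instantiating each quantifier over the (linearly many) traces of $\krip'$ and then checking the quantifier-free LTL body on the resulting tuple of traces; the number of instantiations is polynomial in $|\krip'|$ for each fixed number of quantifiers, and evaluating the LTL body on ultimately periodic traces obtained from a tree (each branch leads to a self-loop on a leaf) takes linear time, as already used in the proof of Theorem~\ref{thrm:sys-tree-ae} via~\cite{bf18}. So the whole verification is polynomial, giving an \comp{NP} upper bound overall.

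The main subtlety — and the step I expect to require the most care — is handling the universal quantifiers correctly with respect to the guessed $\krip'$: the traces quantified over must be exactly $\Trace(\krip')$, not $\Trace(\krip)$, so the guess must be made \emph{before} the model-checking phase and the check must range over the guessed sub-plant's traces. One must also be careful that restricting controllable transitions can make previously-reachable states unreachable, changing the trace set in a way that can both help (removing bad universal witnesses) and hurt (removing needed existential witnesses); the nondeterministic guess resolves this uniformly since we simply verify the guessed $\krip'$ on its nose. Finally, I would note that the matching lower bound, \comp{NP}-hardness already for the $\mbox{A}\mbox{A}^+$ fragment, is established separately (cf.\ Corollary~\ref{cor:sys-tree-aa}), so this theorem only needs the membership argument.
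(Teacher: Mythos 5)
Your proposal is correct and follows essentially the same route as the paper: nondeterministically guess the restricted plant $\krip'$ and then verify $\krip' \models \varphi$, which the paper handles by citing the (logspace, hence polynomial-time) model-checking result for trees from~\cite{bf18} rather than spelling out the quantifier instantiation as you do. Your additional remarks on legality of the guess and on quantifying over $\Trace(\krip')$ are sound but not needed beyond what the paper states.
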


\begin{proof}
We nondeterministically guess a solution $\krip'$ in polynomial time. Since determining whether or not $\krip' \models \varphi$ 
can be solved in logarithmic space~\cite{bf18}, the synthesis problem is in 
\comp{NP}.
\end{proof}

%
\begin{theorem}
\label{thrm:sys-tree-aa-lower}
\CS{AA-HyperLTL}{\mbox{tree}} is \comp{NP-hard} in the size of the 
plant.
\end{theorem}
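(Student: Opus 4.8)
The plan is to prove \comp{NP}-hardness by a polynomial (in fact logarithmic-space) reduction from propositional satisfiability (SAT). Given a CNF formula $y = c_1 \wedge \cdots \wedge c_m$ over variables $X = \{x_1,\dots,x_n\}$, I build a tree-shaped plant $\krip$ together with an \comp{AA-HyperLTL} sentence $\varphi$ so that $y$ is satisfiable iff $(\krip,\varphi)$ is a positive instance of \CS{AA-HyperLTL}{\mbox{tree}}. The plant has an initial state $s_\init$ with one \emph{uncontrollable} transition into a node $c_j$ for each clause $c_j$, so that every clause node remains reachable in every solution $\krip'$ (since $\uncont' = \uncont$). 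Each clause node $c_j$ has one \emph{controllable} outgoing transition per literal of $c_j$, each leading into a distinct \emph{branch}: a simple path of length $b = \lceil\log_2 n\rceil$ (terminated by a self-loop) whose states are labeled so as to spell out, in binary over time, the index of that literal's variable, plus a dedicated proposition $p$ that marks the branch iff the literal is positive. By Definition~\ref{def:kripke} this is a legal (and tree-shaped) plant, and since every $c_j$ has \emph{only} controllable outgoing transitions, deadlock-freedom of $\krip'$ forces at least one branch per clause node to survive; conversely, deleting the lead edge of a branch while keeping the branch body leaves every state with a successor, so a controller corresponds exactly to a choice, for each clause, of a nonempty set of its literals.

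The specification is
\[
\varphi \;=\; \forall \pi.\, \forall \pi'.\ \G\,(q_\pi \Leftrightarrow q_{\pi'}) \ \Rightarrow\ \G\,(p_\pi \Leftrightarrow p_{\pi'}),
\]
where $q$ abbreviates the proposition(s) encoding the variable index. Because all branches start at the same depth and have the same length, and because $q$ and $p$ are absent on the shared prefix ($s_\init$ and the clause nodes) and on the terminal self-loops, the antecedent $\G\,(q_\pi \Leftrightarrow q_{\pi'})$ holds exactly when the branches carrying $\pi$ and $\pi'$ encode the \emph{same} variable, and the consequent then demands that they carry the \emph{same} polarity. Hence $\krip'\models\varphi$ expresses precisely: among the retained literals, no variable occurs with both polarities.

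For correctness I would argue both directions. If $\alpha$ satisfies $y$, then for each clause $c_j$ pick one literal $\ell_j$ that is true under $\alpha$ and let $\krip'$ keep only the $\ell_j$-branch edge of $c_j$ (deleting the other lead edges, keeping all branch bodies, self-loops, and uncontrollable edges): this $\krip'$ is deadlock-free with $\cont'\subseteq\cont$ and $\uncont'=\uncont$, and all its retained literals agree with $\alpha$, so no variable appears with both polarities and $\krip'\models\varphi$. Conversely, from a solution $\krip'$, deadlock-freedom yields at least one surviving literal per clause; choosing one such $\ell_j$ per clause and reading off its variable/polarity gives, by the consistency guaranteed by $\varphi$, a conflict-free partial assignment, which extends arbitrarily to a total assignment $\alpha$ satisfying every clause (since $\ell_j\in c_j$ is true under $\alpha$). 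Together with Theorem~\ref{thrm:sys-tree-aae-upper}, this shows \CS{HyperLTL}{\mbox{tree}} is \comp{NP}-complete and locates the source of hardness already at two universal quantifiers.

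The main obstacle — and essentially the only step requiring care — is the encoding layer: the single temporal formula $\G\,(q_\pi \Leftrightarrow q_{\pi'}) \Rightarrow \G\,(p_\pi \Leftrightarrow p_{\pi'})$ must capture ``same variable $\Rightarrow$ same polarity'' for \emph{every} pair of branches at once, which dictates that the branches be position-aligned and that $q,p$ be vacuous outside the branch bodies so the $\G$-comparisons are trivially true there. The remaining ingredients — that $\krip$ is tree-shaped, that pruning lead edges (but not branch bodies) preserves legality of the plant, that unreachable clause nodes are impossible, and the textbook equivalence between satisfiability of a CNF and the existence of a conflict-free choice of one literal per clause — are routine. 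The plant has $O(m\log n)$ states and is computable in logarithmic space, so the reduction is valid for hardness measured in the size of the plant, while $\varphi$ is of fixed size independent of the instance.
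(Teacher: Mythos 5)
Your reduction is correct and follows essentially the same strategy as the paper's proof: uncontrollable edges from the root force every clause gadget to survive, deadlock-freedom forces the controller to keep at least one controllable literal branch per clause, and a fixed $\forall\forall$ formula forbids retaining a variable with both polarities, so controllers correspond exactly to satisfying assignments. The only difference is cosmetic: the paper (reducing from 3-SAT) encodes the variable identity positionally with branches of length $n$ and uses the formula $\G(\neg\pos_{\pi_1}\vee\neg\negt_{\pi_2})$, whereas you use a $\log n$-bit index with an implication-style consistency formula --- both encodings work.
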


\begin{proof}
We reduce the 3-SAT problem to the controller synthesis problem. The 3SAT
problem is as follows:

\begin{quote}
Let $\{x_1, x_2, \dots, x_n\}$ be a set of propositional variables. Given is a 
Boolean formula $y = y_1 \wedge y_2 \wedge \cdots \wedge y_m$, where each 
$y_j$, for $j \in [1, m]$, is a disjunction of exactly three literals. Is 
$y$ satisfiable? That is, does there exist an assignment of truth values to 
$x_1, x_2, \dots, x_n$, such that $y$ evaluates to true.
\end{quote}

We now present a mapping from an arbitrary instance of 3SAT to the synthesis 
problem of a tree-shaped plant and a HyperLTL formula of the form 
$\forall \pi. \forall \pi'.\psi$. Then, we show that the plant satisfies the 
HyperLTL formula if and only if the answer to the 3SAT problem is affirmative. 
Figure~\ref{fig:sys-tree-aa-app} shows an example.

\noindent \textbf{Plant $\krip = \ktuple$: } 

\begin{itemize}
\item {\em (Atomic propositions $\AP$)} We include two atomic propositions: 
$\pos$ and $\negt$ to mark the positive and negative literals in each clause. 
Thus,
$$\AP = \big\{\pos, \negt\}.$$

\item {\em (Set of states $\States$)}  We now identify the members of $\States$:

\begin{itemize}

\item First, we include an initial state $\state_\init$. Then, for each clause 
$y_j$, where $j \in [1, m]$, we include a state $r_j$.

\item Let $y_j = (l, l', l'')$ be a clause in the 3SAT formula. We include 
the following set of states:
$$\Big\{v_{j_i}, v'_{j_i}, v''_{j_i} \mid i \in [1,n]\Big\},$$
where $n$ is the number of propositional variables.
If $l = x_i$ is in $y_j$, then we label state $v_{j_i}$ with proposition 
$\pos$. If $l = \neg x_i$ in $y_j$, then we label state $v_{j_i}$ 
with proposition $\negt$. We analogously label $v'$ and $v''$ associated to 
$l'$ 
and $l''$, respectively.

\end{itemize}
Thus, we have
\begin{align*}
\States = & \Big\{r_j \mid j \in [1,m]\Big\} 
\; \cup \\
& \Big\{v_{j_i}, v'_{j_i}, v''_{j_i} \mid i \in [1,n] \ \wedge \ j \in 
[1,m] \Big\}.
\end{align*}

\item {\em (Uncontrollable transitions $\uncont$)} We include an {\em 
uncontrollable} transition $(\state_\init, r_{j})$, for each clause $y_j$ in the 
3SAT formula, where $j \in [1, m]$:
$$
\inp = \Big\{(\state_\init, r_{j}) \mid j \in [1, m]\Big\}.
$$
These uncontrollable transitions ensure that during synthesis, all clauses are 
preserved.

\item {\em (Controllable transitions $\cont$)} We now identify the members of 
$\cont$:
 
\begin{itemize}


\item We connect all the states corresponding to the states corresponding 
to the propositional variables in a sequence. That is, we include the following 
transitions for each $j \in [1,m]$:
\begin{align*}
& \Big\{(r_{j}, v_{j_1}), (r_{j}, v'_{j_1}), 
(r_{j}, v''_{j_1})\Big\} \; \cup \\
& \Big\{(v_{j_i}, v_{j_{i+1}}), (v'_{j_i}, v'_{j_{i+1}}), (v''_{j_i}, 
v''_{j_{i+1}}) \mid i \in [1, n)\Big\}
\end{align*}

\item We also add a self-loop at each leaf state:
\begin{align*}
\Big\{(v_{j_n}, v_{j_n}), (v'_{j_n}, v'_{j_n}), (v''_{j_n}, v''_{j_n}) \mid  j 
\in [1,m] \Big\}
\end{align*}

\end{itemize}

\begin{figure}[t]
\centering
\scalebox{.8}{
\begin{tikzpicture}
 
 \tikzset{
    >=stealth,
    possible world/.style={circle,draw,thick,align=center},
    real world/.style={double,circle,draw,thick,align=center},
    minimum size=26pt
}

\coordinate (init) at (0, 0);

\node[draw,circle,text width=0.5cm,fill=white] (initstate) at ($ (init) + (3, 
1) $) {};

\draw[->] ($ (initstate) + (-.8,.5) $ ) -- (initstate);


\foreach \i in {0, 1} {
\node[draw,circle, minimum size=9mm ,fill=black] (r\i) at ($ (init) + 
(\i*6,-1.5) $) [text=white]
{
};

\pgfmathsetmacro\ii{int(\i+1)};
\node (a) at (r\i.west)[xshift=-3mm] {$r_{\ii}$};
}




\draw[->, dashed] (initstate) -- (r0);
\draw[->, dashed] (initstate) -- (r1);


\foreach \i in {0,1}{
\foreach \k in {0, 1, 2}{
\foreach \j in {1, 2, 3, 4}{
\ifthenelse{\(\i=0 \AND \k=1\) \OR \(\i=1 \AND \k=0\) \OR \(\i=1 \AND \k=2\)} 
{\def\clr{white}} {\def\clr{black!15}};
\draw [fill=\clr] ($ (init) + 
(r0)+ (0, -.8) + ( {(\i*6)+(\k*1.7-1.7)},\j*-1.7) $) ellipse (.65cm and .45cm) 
node 
(v\i\k\j)
{\ifthenelse{\(\i=0 \AND \j=3 \AND \k=2\) \OR \(\i=1 \AND \j=1 \AND \k=0\) \OR 
\(\i=1 \AND \j=2 \AND \k=1\)}{$\{\pos\}$}
{\ifthenelse{\(\i=0 \AND \j=1 \AND \k=0\) \OR \(\i=0 \AND \j=2 \AND \k=1\) \OR 
\(\i=1 \AND \j=4 \AND \k=2\) }{$\{\negt\}$}{}}{}};

\pgfmathsetmacro\ii{int(\i+1)};
\node (b) at (v\i\k\j)[xshift=-6mm,yshift=-5mm] 
{\ifthenelse{\k=0}{$v_{\ii_\j}$}{\ifthenelse{\k=1}{$v'_{\ii_\j}$}{$v''_{\ii_\j}$
} }};

}
\draw [->] (r\i) -- (v\i\k1.north);
}

}


\foreach \i in {0,1}{
\foreach \k in {0, 1, 2}{
\foreach \j/\n in {1/2,2/3,3/4}
{
\draw [->] (v\i\k\j) -- (v\i\k\n);
}}}


\foreach \i in {0,1} {
\foreach \k in {0,1,2} {
\path (v\i\k4) edge [loop below] (v\i\k4);
}}

\end{tikzpicture}
}
\caption{The plant for the 3SAT formula $(\neg x_1 \vee \neg x_2 
\vee x_3) \ \wedge \ (x_1 \vee x_2 \vee \neg x_4)$. The truth assignment $x_1 = 
\tru$, $x_2 = \false$, $x_3 = \false$, $x_4 = \false$ renders the tree with 
white branches, i.e., the grey branches are removed during synthesis.}
\label{fig:sys-tree-aa-app}
\end{figure}
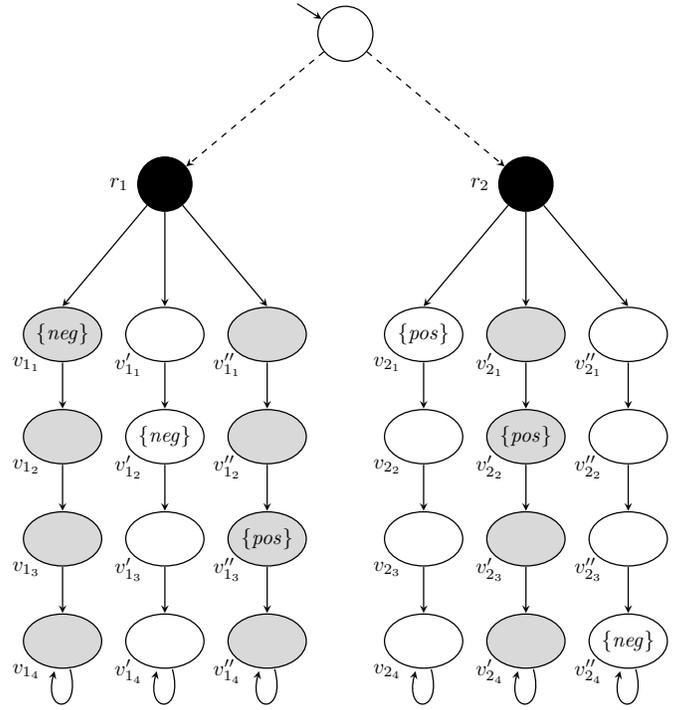

\end{itemize}

\noindent \textbf{HyperLTL formula: } The HyperLTL formula in our mapping
is the following:
\begin{align*}
\varphi_{\mathsf{map}} = & \forall \pi_1.\forall \pi_2. \G 
\Big(\neg \pos_{\pi_1} \, \vee \, \neg \negt_{\pi_2}\Big)
\end{align*}

We now show that the given 3SAT formula is satisfiable if and only if the 
plant obtained by our mapping can be controlled to satisfy the HyperLTL formula 
$\varphi_{\mathsf{map}}$:

\begin{itemize}
 \item ($\Rightarrow$) Suppose the 3SAT formula $y$ is satisfiable, i.e., 
there exists an assignment to the propositional variables $x_1, x_2, \dots, 
x_n$ that satisfies $y$. This implies that each $y_j$ becomes true, which 
in turn means that there exists at least one literal in each $y_j$ that 
evaluates to true. Now, given this assignment, we identify a plant 
$\krip'$ that satisfies the conditions of the controller synthesis problem stated in 
Section~\ref{sec:problem}. Suppose $y_j = (l \vee \l' \vee l'')$, for some $j 
\in [1, m]$. Also, suppose that $l = x_i$, for some $i \in [1,n]$. If 
$x_i = \tru$ in the answer to the 3SAT problem, then we keep states $v_{j_1}, 
v_{j_2}, \dots, v_{j_n}$ and all incoming and outgoing transitions to them. We 
also remove states $v'_{j_1}, v'_{j_2}, \dots, v'_{j_n}$ and $v''_{j_1}, 
v''_{j_2}, \dots, v''_{j_n}$. Likewise, suppose that $l = \neg x_i$, for some 
$i \in [1,n]$. If $x_i = \false$ in the answer to the 3SAT problem, then we 
keep states $v_{j_1}, v_{j_2}, \dots, v_{j_n}$ and all incoming and outgoing 
transitions to them. We also remove states $v'_{j_1}, v'_{j_2}, \dots, 
v'_{j_n}$ and $v''_{j_1}, v''_{j_2}, \dots, v''_{j_n}$ and all incoming and 
outgoing transitions to them. The case for literals $l', l''$ and states $v', 
v''$ follows trivially.

It is straightforward to see that the plant obtained after removing 
the aforementioned states satisfies $\varphi_{\mathsf{map}}$. This is 
because a propositional variable cannot be simultaneously true and false 
in the answer to the 3SAT problem. Thus, if we keep the states corresponding to 
a true variable $x_i$, the branches where some $v_{j_i}$ is labeled by 
$\negt$ is removed. The same argument holds for states $v'$ and $v''$ and the case 
where a variable evaluates to false.
 
\item ($\Leftarrow$) Suppose the answer to the synthesis problem is 
affirmative, i.e., there is a synthesized plant $\krip'$ 
that satisfies the HyperLTL formula $\varphi_{\mathsf{map}}$. This means that 
(1) all the $r$ states are preserved, since we cannot remove uncontrollable 
transitions, and (2) there are no pairs of $v$, $v'$, or $v''$ states at the 
same height of the tree of the synthesized plant, such that both $\pos$ 
and $\negt$ are true. We now describe how one can obtain a truth assignment to 
the propositional variables that satisfies the input 3SAT formula. Suppose that 
state $v_{j_1}, v_{j_2}, \dots, v_{j_n}$ appear in $\krip'$ for some $i$ and 
$j$, such that some state $v_{j_i}$ is labeled by $\pos$. We assign truth value 
$\tru$ to variable $x_i$. This assignment makes clause $y_j$ true, since $x_i$ 
is a literal in $y_j$. On the contrary, if state $v_{j_i}$ is labeled by 
$\negt$, then we assign truth value $\false$ to variable $x_i$. This 
assignment makes clause $y_j$ true, since $\neg x_i$ is a literal in $y_j$. 
Same argument holds for states $v'$ and $v''$. Furthermore, since all $r$ 
states are preserved all clauses evaluate to true and, hence, $y$ evaluates to 
$\tru$.
\end{itemize}
This concludes the proof.
\end{proof}

\begin{corollary}
\label{cor:sys-tree-aa}
 \CS{AA-HyperLTL}{\mbox{tree}} is \comp{NP-complete} in the size of the plant.
 
\end{corollary}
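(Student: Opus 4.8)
The plan is to obtain the result by combining the two preceding statements, since \comp{NP-completeness} amounts to a matching upper and lower bound. For the upper bound, I would observe that any \comp{AA-HyperLTL} formula is in particular a HyperLTL formula, so the fragment \CS{AA-HyperLTL}{\mbox{tree}} is syntactically contained in \CS{HyperLTL}{\mbox{tree}}; hence membership in \comp{NP} follows directly from Theorem~\ref{thrm:sys-tree-aae-upper}, with no additional argument needed. Concretely, the \comp{NP} procedure guesses a sub-plant $\krip' = \ktupleprime$ with $\cont' \subseteq \cont$ (and the other components unchanged), which is of polynomial size, and then verifies $\krip' \models \varphi$ in logarithmic space via the model-checking algorithm of~\cite{bf18}.

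For the lower bound, I would simply invoke Theorem~\ref{thrm:sys-tree-aa-lower}, which already establishes \comp{NP-hardness} of \CS{AA-HyperLTL}{\mbox{tree}} via the reduction from 3-SAT. Putting the two bounds together yields \comp{NP-completeness} in the size of the plant.

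There is essentially no obstacle here: the corollary is a bookkeeping step that records the tight complexity for the \comp{AA} fragment once the general \comp{NP} membership (Theorem~\ref{thrm:sys-tree-aae-upper}) and the fragment-specific hardness (Theorem~\ref{thrm:sys-tree-aa-lower}) are in hand. The only point worth stating explicitly is the fragment-containment observation that lets the membership result specialize downward; everything else is immediate.

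\begin{proof}
The \comp{NP} upper bound is immediate from Theorem~\ref{thrm:sys-tree-aae-upper}, since every \comp{AA-HyperLTL} formula is a HyperLTL formula and hence \CS{AA-HyperLTL}{\mbox{tree}} is a sub-problem of \CS{HyperLTL}{\mbox{tree}}. The matching \comp{NP} lower bound is Theorem~\ref{thrm:sys-tree-aa-lower}. Therefore \CS{AA-HyperLTL}{\mbox{tree}} is \comp{NP-complete} in the size of the plant.
\end{proof}
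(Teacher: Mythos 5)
Your proof is correct and matches the paper's intent exactly: the corollary is stated without proof precisely because it follows immediately by combining the \comp{NP} membership of Theorem~\ref{thrm:sys-tree-aae-upper} (which applies to the \comp{AA} fragment as a special case of full HyperLTL) with the \comp{NP}-hardness of Theorem~\ref{thrm:sys-tree-aa-lower}. Nothing further is needed.
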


Corollary~\ref{cor:sys-tree-aa} shows a significant difference between program repair~\cite{bf19} and controller 
synthesis: while the program repair problem for the 
\comp{AA$^*$} fragment is \comp{L-complete}, the problem becomes 
\comp{NP-complete} for controller synthesis.

\begin{corollary}
\label{cor:sys-tree-hltl}
 \CS{HyperLTL}{\mbox{tree}} is \comp{NP-complete} in the size of the plant.
 
\end{corollary}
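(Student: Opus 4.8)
The plan is to derive the result by simply combining the two bounds already established for tree-shaped plants. Membership in \comp{NP} requires no new work: it is exactly the statement of Theorem~\ref{thrm:sys-tree-aae-upper}, which already covers full HyperLTL and proceeds by nondeterministically guessing a solution subsystem $\krip'$ in polynomial time and then verifying $\krip' \models \varphi$ in logarithmic space via the HyperLTL model-checking procedure of~\cite{bf18}. So the only thing left to argue is a matching lower bound.

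For \comp{NP}-hardness, I would observe that every \comp{AA-HyperLTL} sentence is, syntactically, a HyperLTL sentence, so the fragment \comp{AA-HyperLTL} is contained in the full logic. Consequently, the polynomial-time reduction from 3-SAT constructed in the proof of Theorem~\ref{thrm:sys-tree-aa-lower} --- which outputs a tree-shaped plant together with the formula $\varphi_{\mathsf{map}} = \forall \pi_1.\,\forall \pi_2.\; \G(\neg \pos_{\pi_1} \vee \neg \negt_{\pi_2})$ --- is already a valid reduction to \CS{HyperLTL}{\mbox{tree}}: the instance it produces is a legitimate instance of the more general problem, and its yes/no answer is unchanged, since the semantics of a formula does not depend on which syntactic fragment one regards it as living in. Hence \CS{HyperLTL}{\mbox{tree}} inherits hardness from \CS{AA-HyperLTL}{\mbox{tree}}, i.e.\ from Corollary~\ref{cor:sys-tree-aa}.

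Putting the two halves together yields \comp{NP}-completeness. There is essentially no obstacle; the only point worth a sentence of care is the direction of the fragment inclusion --- hardness transfers \emph{upward}, from a sub-logic to the full logic, not the other way around --- and this is immediate once one notes that \comp{AA-HyperLTL} formulas are themselves HyperLTL formulas, so no re-examination of the gadget from Theorem~\ref{thrm:sys-tree-aa-lower} is needed.
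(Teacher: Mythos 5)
Your proposal is correct and matches the paper's (implicit) derivation exactly: membership in \comp{NP} is Theorem~\ref{thrm:sys-tree-aae-upper}, and hardness transfers from the $\forall\forall$ fragment of Theorem~\ref{thrm:sys-tree-aa-lower} because \comp{AA-HyperLTL} is syntactically contained in full HyperLTL. Nothing further is needed.
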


\section{Complexity of Controller Synthesis for Acyclic Graphs}
\label{sec:acyclic}

In this section, we analyze the complexity of the controller synthesis problem 
for acyclic plants.

\subsection{The Alternation-free Fragment}

We start with the existential fragment. It turns out that for this fragment, controller synthesis and model checking are actually the same problem; hence, the complexity of controller synthesis is \comp{NL-complete}, as known from model checking~\cite{bf18}.

\begin{theorem}
  \label{thrm:sys-acyc-e}
\CS{E$^*$-HyperLTL}{\mbox{acyclic}} is \comp{NL-complete} in the size of the 
plant.
\end{theorem}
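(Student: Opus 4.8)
The plan is to reduce the controller synthesis problem for the alternation-free existential fragment directly to HyperLTL model checking, for which \comp{NL-completeness} over acyclic plants is already known~\cite{bf18}. The conceptual heart is the claim that for \comp{E$^*$}-HyperLTL the two problems literally coincide; once that is established, both the upper and the lower bound are inherited.

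First I would record the \emph{monotonicity} of \comp{E$^*$}-HyperLTL sentences: if $\varphi = \exists \pi_1 \cdots \exists \pi_k.\, \psi$ with $\psi$ quantifier-free, then $T \models \varphi$ together with $T \subseteq T'$ implies $T' \models \varphi$, since the witness traces $t_1,\dots,t_k$ already lie in $T'$. Next, for any plant $\krip'$ that is a valid witness to the synthesis problem we have $\Trace(\krip') \subseteq \Trace(\krip)$, because $s'_\init = s_\init$ and $\cont' \cup \uncont' \subseteq \cont \cup \uncont$, so every path of $\krip'$ is a path of $\krip$. Combining the two: if some controller $\krip'$ satisfies $\varphi$, then $\krip$ itself satisfies $\varphi$; conversely, if $\krip \models \varphi$, then $\krip' := \krip$ is already a legal controller, since it meets $\States' = \States$, $s'_\init = s_\init$, $\cont' = \cont \subseteq \cont$, $\uncont' = \uncont$, $L' = L$, and, as $\krip$ is deadlock-free by Definition~\ref{def:kripke}, it introduces no deadlock. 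Hence \CS{E$^*$-HyperLTL}{\mbox{acyclic}} has answer ``yes'' iff $\krip \models \varphi$.

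For the \comp{NL} upper bound I then appeal to the model checking algorithm of~\cite{bf18}: nondeterministically advance the $k$ paths through the plant one step at a time, storing only the current state of each path together with the state of the fixed-size automaton tracking $\psi$; acyclicity guarantees that each path reaches a terminal self-loop within $|\States|$ steps, after which its label is constant, so $O(k \log |\States|)$ bits suffice. For \comp{NL}-hardness I would reduce directed reachability (\comp{STCON}): given a digraph with vertices $s$ and $t$, take its standard $|\States|$-layer unrolling (a logspace-computable acyclic frame in which each non-final vertex is given a ``stay'' edge to its copy in the next layer so that it has a successor, and the final-layer copies carry self-loops), label every copy of $t$ with an atomic proposition $a$, declare all transitions controllable, and pose the instance with $\varphi = \exists \pi.\, \F a_\pi$; by the equivalence just established the synthesis instance is positive iff $t$ is reachable from $s$. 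The one step needing care is this unrolling, which must keep the frame acyclic while respecting the requirement of Definition~\ref{def:kripke} that every state has an outgoing transition and terminal states carry a self-loop; everything else is routine, and the lower bound can alternatively be read off from the model checking hardness in~\cite{bf18}.
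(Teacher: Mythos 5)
Your proposal is correct and follows essentially the same route as the paper: you observe that for purely existential (hence monotone) formulas the controller synthesis problem coincides with model checking, and then inherit \comp{NL}-completeness from the acyclic model checking result of~\cite{bf18}. Your explicit layered-\comp{STCON} reduction is a fine self-contained substitute for the lower bound, but it adds nothing beyond what the paper obtains by citing the model checking hardness directly.
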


\begin{proof}
  For existential formulas, the synthesis problem is equivalent to the model 
checking problem. A given plant satisfies the formula iff there is a solution 
to the synthesis problem, as we are only dealing with existential quantifiers. 
If the formula is satisfied, the witness to the synthesis problem is simply the 
original plant. Since the model checking problem for existential formulas over 
acyclic graphs is \comp{NL}-complete~\cite[Theorem 2]{bf18}, the same holds 
for the synthesis problem.
\end{proof}

We now switch to the universal fragment, where the complexity of the problem 
jumps to \comp{NP-complete}.

\begin{theorem}
  \label{thrm:sys-acyc-a}
\CS{A$^*$-HyperLTL}{\mbox{acyclic}} is \comp{NP-complete} in the size of the 
plant.
\end{theorem}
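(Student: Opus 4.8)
\emph{Plan.} The statement splits into membership in \comp{NP} and \comp{NP}-hardness. For membership I will use the guess-and-check paradigm: since the only freedom in the controller synthesis problem of Section~\ref{sec:problem} is the choice of a subset $\cont' \subseteq \cont$, a candidate solution is a certificate of size polynomial in the plant, and checking it reduces to a model-checking query that is cheap for universal formulas over acyclic frames. For hardness I will observe that tree-shaped plants are a special case of acyclic plants, so the lower bound is inherited from Theorem~\ref{thrm:sys-tree-aa-lower} (alternatively one may replay the 3SAT reduction directly on an acyclic frame).

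\emph{Membership in \comp{NP}.} Given a plant $\krip = \ktuple$ and a universal HyperLTL sentence $\varphi = \forall \pi_1 \cdots \forall \pi_k.\,\psi$, the algorithm nondeterministically guesses a subset $\cont' \subseteq \cont$, described by at most $|\States|^2$ keep/drop decisions and hence polynomial in the size of the plant. It forms $\krip' = \ktupleprime$ with $\States' = \States$, $s'_\init = s_\init$, $\uncont' = \uncont$, $L' = L$, and verifies (i) that $\krip'$ is a legal plant, i.e., every state still has an outgoing transition in $\cont' \cup \uncont$ (so no deadlock is created, as required in Section~\ref{sec:problem}), and (ii) that $\krip' \models \varphi$. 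Check (i) is a linear scan. For check (ii), note $\krip' \models \forall \vec\pi.\,\psi$ fails iff $\krip' \models \exists \vec\pi.\,\lnot\psi$ holds; the latter is model checking an existential HyperLTL formula over an acyclic plant, which is \comp{NL} by \cite[Theorem~2]{bf18} (as used for Theorem~\ref{thrm:sys-acyc-e}), so by closure of \comp{NL} under complement (Immerman--Szelepcs\'enyi), check (ii) is in \comp{NL} $\subseteq$ \comp{P}. Both checks therefore run in deterministic polynomial time in the size of the plant, so the procedure is in \comp{NP}. Soundness and completeness are immediate: any witnessing plant $\krip'$ differs from $\krip$ only by a choice of $\cont' \subseteq \cont$, hence is produced by some guess, and conversely every accepted guess yields such a witness.

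\emph{\comp{NP}-hardness.} Every tree-shaped frame is acyclic in the sense of Section~\ref{sec:prelim}: its only loops are the self-loops on the leaves, which are precisely its terminal states; moreover \comp{AA-HyperLTL} is contained in \comp{A$^*$-HyperLTL}. Consequently the reduction of Theorem~\ref{thrm:sys-tree-aa-lower} --- which maps a 3SAT instance to a tree-shaped plant together with the fixed formula $\forall \pi_1.\forall \pi_2.\,\G(\neg \pos_{\pi_1} \vee \neg \negt_{\pi_2})$ --- is simultaneously a polynomial-time reduction from 3SAT to \CS{A$^*$-HyperLTL}{\mbox{acyclic}}, with a plant of size polynomial in the instance and a fixed formula. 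Since 3SAT is \comp{NP}-hard, \CS{A$^*$-HyperLTL}{\mbox{acyclic}} is \comp{NP}-hard in the size of the plant, and together with the previous paragraph this yields \comp{NP}-completeness.

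\emph{Main obstacle.} There is little genuine difficulty; the one point needing care is check (ii): an acyclic plant may have exponentially many traces in $|\States|$, so verifying a universal HyperLTL formula must still be shown polynomial. This is exactly where acyclicity is essential --- dualizing to the existential case and invoking the \comp{NL} bound of \cite{bf18} keeps the check in \comp{P}, whereas without acyclicity this dualization fails and the general-graph universal fragment needs a separate treatment. A minor secondary subtlety is that the guessed $\krip'$ must not introduce a deadlock, which merely restricts the admissible guesses and is trivially checkable.
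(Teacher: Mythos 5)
Your proposal is correct and follows essentially the same route as the paper: guess the subset of controllable transitions and verify in polynomial time for membership, and inherit the 3SAT reduction from the tree-shaped case (Theorem~\ref{thrm:sys-tree-aa-lower}) for hardness, since trees are a special case of acyclic frames. Your added detail on why the verification step is polynomial (dualizing to the existential fragment and invoking the \comp{NL} model-checking bound for acyclic plants) is a correct elaboration of what the paper leaves implicit.
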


\begin{proof}
For the upper bound, one can guess a solution to the synthesis problem and 
verify in polynomial time. The lower bound follows the lower bound of 
\CS{AA-HyperLTL}{\mbox{tree}} shown in Theorem~\ref{thrm:sys-tree-aa-lower}.
\end{proof}


\subsection{Formulas with Quantifier Alternation}

We first build on Theorem~\ref{thrm:sys-acyc-a} to study the complexity of the 
synthesis problem for the \comp{E$^*$A$^*$} Fragment.

\begin{theorem}
  \label{thrm:sys-acyc-ea}
\CS{E$^*$A$^*$-HyperLTL}{\mbox{acyclic}} is in $\mathsf{\Sigma_2^p}$ in the 
size of the plant.
\end{theorem}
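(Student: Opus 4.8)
The plan is to exhibit a $\mathsf{\Sigma_2^p}$ procedure that, in its existential phase, guesses a candidate controller \emph{together with} witnesses for the leading existential quantifiers, and, in its universal phase, verifies the quantifier-free body against all candidate witnesses for the trailing universal quantifiers. The enabling observation about acyclic frames is that any infinite path visits pairwise-distinct non-terminal states and therefore, after at most $|\States|$ steps, reaches a terminal state and takes its self-loop forever; consequently every trace of any sub-plant $\krip' \subseteq \krip$ is an ultimately periodic word of the form $L(\state_0)\cdots L(\state_{\ell-1})\,(L(\state_\ell))^\omega$ with $\ell \le |\States|$, i.e., a lasso of period one whose representation has size polynomial in the size of the plant.

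Write $\varphi = \exists \pi_1 \cdots \exists \pi_j.\ \forall \pi_{j+1} \cdots \forall \pi_k.\ \psi$ with $\psi$ quantifier-free (note $j$, $k$, and $\psi$ are fixed, since we measure complexity in the size of the plant). First I would, in the existential phase, guess (i) a set $\cont' \subseteq \cont$ of controllable transitions, yielding the candidate plant $\krip' = \langle \States, \state_\init, \cont', \uncont, L\rangle$, and (ii) lasso representations of $j$ traces $t_1,\dots,t_j$; this certificate is polynomial in $|\krip|$. Then, in the universal phase, I would guess lasso representations of $k-j$ further traces $t_{j+1},\dots,t_k$, again polynomial in $|\krip|$. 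The matrix of the computation, which must run in deterministic polynomial time, then checks: (a) that $\krip'$ is a legal plant in the sense of Definition~\ref{def:kripke}, i.e., that every state retains an outgoing transition in $\cont' \cup \uncont$ (no deadlock) --- preservation of $\uncont$ and $\cont' \subseteq \cont$ holding by construction; (b) that each $t_i$ with $i \le j$ is a trace of $\krip'$, by checking its finite prefix uses only transitions of $\cont' \cup \uncont$ and ends on a self-loop; and (c) that \emph{either} some $t_i$ with $i > j$ fails to be a trace of $\krip'$, \emph{or} $\Trace(\krip'),\Pi \models \psi$ for $\Pi = [\pi_1 \mapsto t_1,\dots,\pi_k \mapsto t_k]$. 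For (c), note that the zip of the period-one lassos $t_1,\dots,t_k$ is again a period-one lasso of polynomial length, and satisfaction of the fixed formula $\psi$ on it is ordinary LTL evaluation over a single ultimately periodic word, computable in polynomial time (this is also a special case of the model-checking procedure of~\cite{bf18}).

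Correctness is then the point to verify carefully. An existential guess that survives with a legal $\krip'$, valid existential witnesses $t_1,\dots,t_j$, and such that \emph{every} universal guess is accepted, is exactly a controller $\krip'$ (with $\cont' \subseteq \cont$, $\uncont' = \uncont$, $L'=L$, no deadlock) satisfying $\krip' \models \varphi$: clause (c) guarantees that every tuple of traces of $\krip'$ assigned to $\pi_{j+1},\dots,\pi_k$ makes $\psi$ true, and clauses (a)/(b) guarantee the rest of the HyperLTL semantics. Conversely, any controller $\krip'$ with $\krip' \models \varphi$, together with HyperLTL-witness traces for its existential prefix, yields an accepting existential guess, since every universal guess that \emph{is} a tuple of traces of $\krip'$ satisfies $\psi$ while any universal guess that is not such a tuple is accepted vacuously by the first disjunct of (c). As this is an $\exists\,\forall$ computation with a polynomial-time matrix over polynomial-size certificates, membership in $\mathsf{\Sigma_2^p}$ follows.

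I expect the main obstacle to be the bookkeeping around the \emph{controlled} plant rather than any deep combinatorics: one must ensure that the existential and universal trace candidates are tested for membership in $\Trace(\krip')$ and not in $\Trace(\krip)$, that the no-deadlock requirement of Definition~\ref{def:kripke} is re-checked on $\krip'$ (so that removing controllable transitions does not silently create an illegal plant), and that a universal candidate which is not a trace of $\krip'$ is treated as non-refuting rather than as a counterexample. The polynomial bounds on both the trace representations and the matrix check rest entirely on acyclicity; dropping it would force a lasso search through the product of $\krip'$ with an automaton for $\psi$, which is precisely the source of the higher complexities in the general-graph case (and matches the $\mathsf{\Sigma_2^p}$-style gap one already sees for the purely universal fragment in Theorem~\ref{thrm:sys-acyc-a}).
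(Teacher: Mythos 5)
Your proposal is correct and follows essentially the same route as the paper: acyclicity bounds every trace by a polynomial-size lasso, so one existentially guesses the controller together with witnesses for the leading existential quantifiers and then verifies the trailing universal block as a $\mathsf{coNP}$ check, yielding $\mathsf{\Sigma_2^p}$. Your write-up is in fact more careful than the paper's two-sentence argument, in particular about coupling the existential witnesses to $\Trace(\krip')$ rather than $\Trace(\krip)$, re-checking the no-deadlock condition on $\krip'$, and treating universal candidates outside $\Trace(\krip')$ as vacuously accepted.
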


\begin{proof}
We show membership to $\mathsf{\Sigma_2^p}$. Since the plant is acyclic, the 
length of the traces is bounded by the number of states. We can thus 
nondeterministically guess the witness to the existentially quantified traces 
in polynomial time, and then solve the problem for the remaining formula, 
which has only universal quantifiers. By Theorem~\ref{thrm:sys-acyc-a} (i.e., 
\comp{NP-hardness} of the problem for the \comp{A$^*$} fragment), it holds that 
\CS{E$^*$A$^*$-HyperLTL}{\mbox{acyclic}} is in $\mathsf{\Sigma_2^p}$.
\end{proof}

Next, we consider formulas where the number of quantifier alternations is 
bounded by a constant $k$. We show that changing the frame structure from 
trees to acyclic graphs results in a significant increase in complexity (see 
Table~\ref{tab:system}). The complexity of the synthesis problem is 
similar to the model checking problem, with the synthesis problem being one 
level higher in the polynomial hierarchy (cf.~\cite{bf18}). This also 
means that complexity of the problem is aligned with the complexity of the 
model repair problem~\cite{bf19}.

\begin{theorem} For $k\geq 2$,
\label{thrm:system-acyc-EAk1}
\CS{\mbox{(EA)}$k$\mbox{-HyperLTL}}{\mbox{acyclic}} is 
\comp{$\mathsf{\Sigma^p_{k}}$-complete} in the size of the plant.
 For $k \geq 1$,  \CS{\mbox{(AE)}$k$\mbox{-HyperLTL}}{\mbox{acyclic}} is
\comp{$\mathsf{\Sigma^p_{k+1}}$-complete} in the size of the plant.

\end{theorem}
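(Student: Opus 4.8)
The plan is to establish matching upper and lower bounds, using as a black box the complexity of HyperLTL model checking over acyclic structures (cf.~\cite{bf18}): a formula with $j$ quantifier alternations can be model-checked over an acyclic structure in $\mathsf{\Sigma^p_j}$ if the leading quantifier is existential, and in $\mathsf{\Pi^p_j}$ if it is universal. For the upper bounds, note that acyclicity makes every trace ultimately constant after at most $|\States|$ steps, so a candidate sub-plant $\krip'$ and every trace witness have polynomial size. For the \comp{(AE)$k$} fragment I would nondeterministically guess $\krip'$ (costing one existential level) and then model-check the still-\comp{(AE)$k$} formula on the fixed $\krip'$; the residual check is in $\mathsf{\Pi^p_k}$, so the problem lies in $\mathsf{\Sigma^p_{k+1}}$. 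For the \comp{(EA)$k$} fragment the key refinement is to \emph{fold} the guess of $\krip'$ together with the polynomially sized witnesses for the leading $\exists^*$ block into one nondeterministic step; what remains is to model-check a \comp{(AE)$(k{-}1)$} formula over the fixed $\krip'$ with the guessed traces substituted in, which is in $\mathsf{\Pi^p_{k-1}}$, placing the problem in $\mathsf{\Sigma^p_k}$. Guessing $\krip'$ and only afterwards the existential witnesses would cost an extra level, so this fold is exactly what makes the bound tight and why the polarity of the leading quantifier matters.

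For the \comp{(EA)$k$} lower bound, declaring every transition uncontrollable leaves $\krip$ as the unique admissible sub-plant, so synthesis collapses to model checking; the bound then follows from $\mathsf{\Sigma^p_k}$-hardness of model checking \comp{(EA)$k$} over acyclic structures, which in turn is a standard reduction from $\mathsf{QSAT}_k$ analogous to, but simpler than, the one I describe next for \comp{(AE)$k$}. The \comp{(AE)$k$} lower bound is the main work: I would reduce from $\mathsf{QSAT}_{k+1}$, the truth problem of $\exists X_1\forall X_2\exists X_3\cdots Q_{k+1}X_{k+1}.\,\phi$, which is $\mathsf{\Sigma^p_{k+1}}$-complete and which one may take with $\phi$ in 3-CNF or in 3-DNF as convenient. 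The acyclic plant carries one ``diamond chain'' per variable block: one diamond per variable, its two parallel branches labelled by the chosen truth value, so that a path through the chain spells out a complete assignment to the block. Because the plant is acyclic rather than a tree, such a chain has exponentially many paths but only polynomially many states, so a universal HyperLTL quantifier over it genuinely ranges over all assignments of its block — this is the feature that separates acyclic plants from trees. The branches of the $X_1$-chain are made controllable, so the synthesized $\krip'$ commits to an $X_1$-assignment, whereas the chains for $X_2,\dots,X_{k+1}$ are uncontrollable (hence present in every admissible $\krip'$) and are traversed, in order, by the first $k$ of the $k{+}1$ quantifier blocks of the \comp{(AE)$k$} formula; the innermost, ``for free'' block picks an index into $\phi$, and the quantifier-free LTL matrix $\psi$ checks that the indicated clause holds under the labels seen along the trace variables.

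The two correctness directions are then routine, in the spirit of the gadget in the proof of Theorem~\ref{thrm:sys-tree-aa-lower}: any $\krip'$ that solves the synthesis problem induces a satisfying $X_1$-assignment by reading off one kept branch per $X_1$-diamond — and it does no harm if the controller wastefully keeps both branches of a diamond, since that only enlarges the trace set and hence only strengthens the universal requirement — while conversely a true QBF yields the $\krip'$ that keeps exactly the committed branches, which is deadlock-free because every diamond branch extends to a terminal self-loop as required by Definition~\ref{def:kripke}. The step I expect to demand the most care is the bookkeeping of quantifier polarities: the polarity of the innermost clause-indexing block is fixed by $k$ and is in general opposite to that of $Q_{k+1}$, but the matrix check reads ``$\forall$ conjunct'' for a 3-CNF $\phi$ and ``$\exists$ disjunct'' for a 3-DNF $\phi$, so choosing the normal form of $\phi$ according to the parity of $k$ lines every block up; one must also verify that the clause encoding is simple enough for the fixed LTL matrix $\psi$ to look up a clause's literals along the relevant trace variable. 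Combined with the upper bound, this yields the stated completeness results.
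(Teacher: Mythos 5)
Your overall route is the same as the paper's. For the upper bounds you guess the controller, folding in the polynomially bounded witnesses for a leading $\exists^*$ block, and fall back on the acyclic model-checking bounds of~\cite{bf18}; for the \mbox{(AE)}$k$ lower bound you reduce from QBF, letting the pruning of controllable diamond branches play the role of the outermost existential block while uncontrollable diamond chains — whose exponentially many paths are exactly what acyclicity buys over trees — carry the remaining blocks. The differences are in the details: the paper obtains \mbox{(EA)}$k$-hardness by observing that \mbox{(AE)}$(k{-}1)$ formulas are contained in \mbox{(EA)}$k$ (your ``make everything uncontrollable and inherit model-checking hardness'' works too, and is what the paper does for general graphs); the paper uses a \emph{single} diamond chain for all variables, marking each position with a depth proposition $q^d$, which makes the positional synchronization between assignment traces and clause traces much easier than with one chain per block; and it always keeps a 3-CNF matrix with the clause trace folded into the innermost universal block rather than your parity-dependent CNF/DNF switch.

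One concrete omission: as written, no trace variable of your HyperLTL formula is assigned to the $X_1$-chain — the first $k$ blocks traverse the chains for $X_2,\dots,X_{k+1}$ and the last block picks the clause — so the matrix $\psi$ has no way to evaluate literals over $X_1$-variables in the selected clause. The fix is to add a universally quantified trace variable for the $X_1$-chain to the leading $\forall$ block (this is precisely the paper's $\forall\pi_{k+1}$, quantified universally alongside $\forall\pi_k$): since synthesis prunes each controllable $X_1$-diamond to at least one branch, universal quantification over the surviving $X_1$-traces reads off a committed assignment, which is exactly the reading your correctness argument (``keeping both branches only strengthens the universal requirement'') already presupposes. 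With that variable made explicit, the reduction goes through as in the paper.
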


\begin{proof}
We show membership in \comp{$\mathsf{\Sigma^p_{k}}$} and 
\comp{$\mathsf{\Sigma^p_{k+1}}$}, respectively, as follows. Suppose that the 
first quantifier is existential.
Since the plant is acyclic, the length of the traces is bounded by the number 
of 
states. We can thus nondeterministically guess the witness to the 
existentially quantified traces in polynomial time, and then verify the 
correctness of the guess by
model checking the remaining formula, which has $k-1$ quantifier alternations
and begins with a universal quantifier. The verification can be done 
in \comp{$\mathsf{\Pi^p_{k-1}}$}~\cite[Theorem 3]{bf18}. Hence, the synthesis 
problem is in \comp{$\mathsf{\Sigma^p_{k}}$}.

If the first quantifier is universal, we apply the same procedure except that 
we only guess the solution to the synthesis problem (there are no leading 
existential quantifiers). In this case, the formula for the model checking 
problem has $k$ quantifier alternations. Hence, we solve the model checking 
problem in \comp{$\mathsf{\Pi^p_{k}}$} and the synthesis problem in 
\comp{$\mathsf{\Sigma^p_{k+1}}$}.

We give a matching lower bound for 
\CS{\mbox{(AE)}$k$\mbox{-HyperLTL}}{\mbox{acyclic}}. Since the
$\mbox{(AE)}k\mbox{-HyperLTL}$  formulas are contained
in the $\mbox{(EA)}k+1\mbox{-HyperLTL}$ formulas (not using the outermost 
existential quantifiers), this also provides a matching lower bound for 
\CS{\mbox{(EA)}$k$\mbox{-HyperLTL}}{\mbox{acyclic}}.

We establish the lower bound for 
\CS{\mbox{(AE)}$k$\mbox{-HyperLTL}}{\mbox{acyclic}} via a reduction from the 
{\em 
quantified Boolean formula} (QBF) satisfiability problem~\cite{gj79}:

\begin{quote}

{\em Given is a set of Boolean variables, $\{x_1, x_2, \dots, x_n\}$, and a 
quantified Boolean formula
$$y=\quant_1 x_1.\quant_1 x_2\dots\quant_{n-1} x_{n-1}.\quant_n x_n.(y_1 \, 
\wedge \, y_2 \, \wedge \dots \wedge \, y_m)$$
where each $\quant_i \in \{\forall, \exists\}$ ($i \in [1, n]$) and each clause 
$y_j$ ($j \in [1, m]$) is a disjunction of three literals (3CNF). Is $y$ 
true?}
 
\end{quote}
If $\quant_1 = \exists$ and $y$ is restricted to at most $k$
alternations of quantifiers, then QBF satisfiability is complete for
\comp{$\mathsf{\Sigma^p_{k}}$}.
We note that in
the given instance of the QBF problem:

\begin{itemize}

\item The clauses may have more than three literals, but three is sufficient of 
our purpose;

\item The inner Boolean formula has to be in conjunctive normal form in order 
for our reduction to work;

\item Without loss of generality, the variables in the literals of the same 
clause are different (this can be achieved by a simple pre-processing of the 
formula), and

\item If the formula has $k$ alternations, then it has $k+1$ alternation {\em 
depths}. For example, formula
$$\forall x_1.\exists x_2. (x_1 \vee \neg x_2)$$ 
has one alternation, but two alternation depths: one for $\forall x_1$ and the 
second for $\exists x_2$. By $d(x_i)$, we mean the alternation depth of Boolean 
variable $x_i$.

\end{itemize}

We now present, for $k \geq 1$, a mapping from an arbitrary instance of QBF with 
$k$ alternations and where $\quant_1 = \exists$ to the synthesis problem of an 
acyclic plant and a HyperLTL formula with $k-1$ quantifier alternations and a 
leading universal quantifier. Then, we show that the plant can be pruned so 
that it satisfies the HyperLTL formula if and only if the answer to the QBF 
problem is affirmative.

The reduction is similar to the reduction from QBF satisfiability to the 
HyperLTL model checking problem~\cite[Theorem 3]{bf18} except for the treatment 
of the outermost existential quantifiers. In the reduction to the model 
checking problem, these quantifiers are translated to trace quantifiers, 
resulting in a HyperLTL formula with $k$ quantifier alternations and a leading 
existential quantifier. In the reduction to the synthesis problem, the 
outermost existential quantifiers are resolved by the pruning of controllable 
transitions. For this reason, it suffices to build a HyperLTL formula with one 
less quantifier alternation, i.e., with $k-1$ quantifier alternations, and a 
leading universal quantifier.

In the following, we first describe the plant from the reduction to the model 
checking problem~\cite[Theorem 3]{bf18} and then describe the necessary 
additions for the reduction to the synthesis problem. 
Figure~\ref{fig:system-acyclic-qbf} shows an example.

\begin{figure}[t]
\centering
\scalebox{.8}{
\pgfdeclarelayer{bg}    
\pgfsetlayers{bg,main}  

\begin{tikzpicture}

\coordinate (init) at (0, 0);

\node[draw,circle,text width=0.3cm,fill=white] (initstate) at ($ (init) + 
(3, 2) $) {};
\draw[->] ($ (initstate) + (-.5, .5) $) -- (initstate);

\node[draw,circle,text width=0.3cm,fill=black!60] (sh0) at (init) {};

\foreach \i/\j in {0/1,1/2, 2/3}
{

\draw [fill=white] ($ (sh\i) + (-1,-1) $) ellipse (.65cm and .3cm) node (s\i)
{$\{q^{\j}, p\}$};

\draw [fill=white]($ ({sh\i}) + (1,-1) $) ellipse (.65cm and .3cm) node (sb\i) 
{$\{q^{\j}, \bar{p}\}$};

\node[draw,circle,text width=0.25cm,fill=black!60] (sh\j) at ($ (sh\i) +
(0,-2) $)  {};

\draw [->, dashed] (sh\i) -- (s\i);
\draw [->, dashed] (sh\i) -- (sb\i);
\draw [->, dashed] (s\i) -- (sh\j);
\draw [->, dashed] (sb\i) -- (sh\j);

}

\path (sh3) edge [loop below] (1);

\foreach \i/\j in {0/1,1/2}
{

\node[draw,circle,text width=0.25cm,fill=black] (u\i0) at ($ (init) + 
(\j*3+.5,0) $) [text=white]{$\hspace*{-0.1cm}\{c\}$};

\foreach \x/\y in {0/1,1/2, 2/3}
{

\ifthenelse{\i = 0}
{
\ifthenelse{\x = 0}
{
\draw [fill=white]($ (u\i\x) + (0,-1) $) ellipse (.65cm and .3cm) node (v\i\x)
{$\{q^{\y}, p\}$};
}{}
\ifthenelse{\x = 1}
{
\draw [fill=white]($ (u\i\x) + (0,-1) $) ellipse (.65cm and .3cm) node (v\i\x)
{$\{q^{\y}, \bar{p}\}$};
}{}
\ifthenelse{\x = 2}
{
\draw [fill=white]($ (u\i\x) + (0,-1) $) ellipse (.65cm and .3cm) node (v\i\x)
{$\{q^{\y}, p\}$};
}{}
}{}

\ifthenelse{\i = 1}
{
\ifthenelse{\x = 0}
{
\draw [fill=white]($ (u\i\x) + (0,-1) $) ellipse (.65cm and .3cm) node (v\i\x)
{$\{q^{\y}, \bar{p}\}$};
}{}
\ifthenelse{\x = 1}
{
\draw [fill=white]($ (u\i\x) + (0,-1) $) ellipse (.65cm and .3cm) node (v\i\x)
{$\{q^{\y}, {p}\}$};
}{}
\ifthenelse{\x = 2}
{
\draw [fill=white]($ (u\i\x) + (0,-1) $) ellipse (.65cm and .3cm) node (v\i\x)
{$\{q^{\y}, \bar{p}\}$};
}{}
}{}

\node[draw,circle,text width=0.25cm,fill=black!60] (u\i\y) at ($ (v\i\x) +
(0,-1) $)  {};

\draw [->] (u\i\x) -- (v\i\x);
\draw [->] (v\i\x) -- (u\i\y);
}

\path (u\i3) edge [loop below] (\i);

}

\draw [->, dashed] (initstate) -- (sh0);
\draw [->, dashed] (initstate) -- (u00);
\draw [->, dashed] (initstate) -- (u10);

\begin{pgfonlayer}{bg}

\draw [rounded corners=10,,dotted,fill=black!15] ($ (init) + (-2,.5) $) 
rectangle ++(4,-7.5) node (r1) {};
\node (text1) at ($ (init) +  (-1,.8) $) {$\pi_d$ traces};

\draw [rounded corners=10,,dotted,fill=black!5] ($ (init) + (2.5,.5) $) 
rectangle ++(5,-7.5) node (r2) {};
\node (text1) at ($ (init) +  (7,.8) $) {$\pi'$ traces};

\end{pgfonlayer}

\end{tikzpicture}
}
\caption{Plant for the formula $y = \exists x_1.\forall 
x_2.\exists x_3.(x_1 \vee \neg x_2 \vee x_3) \wedge 
(\neg x_1 \vee x_2 \vee \neg x_3)$.}
\label{fig:system-acyclic-qbf}
\end{figure}
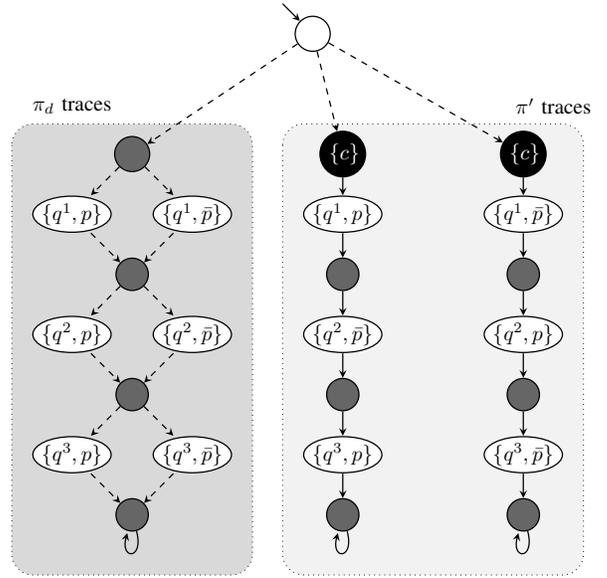

\noindent \textbf{Plant $\krip = \ktuple$: } 

\begin{itemize}

\item {\em (Atomic propositions $\AP$)} For each alternation depth $d \in [1, 
k+1]$, we include an atomic proposition $q^d$. We furthermore include three 
atomic propositions: $c$ is used to mark the clauses, $p$ is used to force 
clauses to become true if a Boolean variable appears in a clause, and 
proposition $\bar{p}$ is used to force clauses to become true if the negation 
of a Boolean variable appears in a clause in our reduction. 

\item {\em (Set of states $\States$)}  We now identify the members of $\States$:

\begin{itemize}

\item First, we include an initial state $\state_\init$ and a state $r_0$. 
Then, for each clause $y_j$, where $j \in [1, m]$, we include a state 
$r_j$, labeled by proposition $c$.
  
\item For each clause $y_j$, where $j \in [1, m]$, we introduce the 
following $2n$ states: 
$$\Big\{v^j_i, u^j_i \mid i \in [1, n]\Big\}.$$
Each state $v^j_i$ is labeled with propositions $q^{d(x_i)}$, and with $p$ if 
$x_i$ is a literal in $y_j$, or with $\bar{p}$ if $\neg x_i$ is a literal in 
$y_j$.

\item For each Boolean variable $x_i$, where $i \in [1, n]$, we include three 
states $s_i$, $\bar{s}_i$, and $\hat{s}_i$. Each state $s_i$ (respectively, 
$\bar{s}_i$) is labeled by $p$ and $q^{d(x_i)}$ (respectively, $\bar{p}$ and
$q^{d(x_i)}$).

\end{itemize}
Thus,
\begin{align*}
S = & \big\{s_\init \big\} \, \cup \, \big\{r_j \mid j \in [0, m]\big\}  \; 
\cup \\
& \big\{v^j_i, u^j_i, s_i, \bar{s_i}, \hat{s}_i \mid i \in [1, n] \wedge 
j \in [1,m]\big\}.
\end{align*}

\item {\em (Uncontrollable transitions $\uncont$)} The set of uncontrollable 
transitions include the following:

\begin{itemize}

\item We add outgoing transitions from the initial state of $\krip$ to clause 
states as well as the state that represent the first propositional variable.
The idea here is to make these transitions uncontrollable to ensure that during 
synthesis the clauses and the diamond structure for all alternation depths 
except $k + 1$ are preserved.

\item For each $i \in [1, n]$, we include transitions $(s_i, \hat{s}_i)$ and 
$(\bar{s}_i, \hat{s}_i)$. For each $i \in [1, n)$, we include transitions 
$(\hat{s}_i, s_{i+1})$ and $(\hat{s}_i, \bar{s}_{i+1})$.

\end{itemize}

Thus,
\begin{align*}
\uncont = & \big\{(\state_\init, r_j) \mid j \in [0, m]\big\} \; \cup
\\
& \big\{(r_0, s_1), (r_0, \bar{s}_1) \big\} \; \cup\\
& \big\{(s_i, \hat{s}_i), (\bar{s}_i, \hat{s}_i) \mid i \in [1, n] \big\} \; 
\cup\\
& \big\{(\hat{s}_i, s_{i+1}), (\hat{s}_i, \bar{s}_{i+1}) \mid i \in [1, n) 
\big\}.
\end{align*}

\item {\em (Controllable transitions $\cont$)} We now identify the 
members of $\cont$:
 
\begin{itemize}
 
\item We add transitions $(r_j, v^j_1)$ for each $j \in [1, m]$.

\item For each $i \in [1, n]$ and $j \in [1, m]$, we include transitions 
$(v^j_i, u^j_i)$. For each $i \in [1, n)$ and $j \in [1, m]$, we include 
transitions $(u^j_i, v^j_{i+1})$.

\item We include two transitions $(r_0, s_1)$ and $(r_0, \bar{s}_1)$.

\item Finally, we include self-loops $(\hat{s}_n, \hat{s}_n)$ and $(u_n^j, 
u_n^j)$, for each $j \in [1, m]$.

\end{itemize}
Thus,
\begin{align*}
\cont = & \big\{(r_j, v^j_1), (u_n^j, u_n^j) \mid j \in [1, m] \big\} \; \cup\\
& \big\{(v^j_i, u^j_i) \mid i \in [1, n] \, \wedge \, j \in [1, m] \big\} \; 
\cup\\
& \big\{(u^j_i, v^j_{i+1}) \mid  i \in [1, n) \, \wedge \, j \in [1, m] \big\}.
\end{align*}

\end{itemize}



\newcommand{\map}{\mathsf{map}}

\noindent \textbf{HyperLTL formula: }  The role of the HyperLTL formula in our 
reduction is to ensure that the QBF instance is satisfiable iff the HyperLTL 
formula is satisfied on the solution to the synthesis problem:
\[ \begin{array}{l}
\label{eq:hltlsys}
\varphi_{\map} = \forall \pi_{k+1}.\forall \pi_{k}.\exists \pi_{k-1} \cdots 
\exists \pi_2. 
\forall \pi_1. \forall \pi'. \\
\qquad  \Bigg( \bigwedge_{d \in \{1, 3, \dots, k\}} \X 
\neg c_{\pi_d} \, \wedge \, \X c_{\pi'}\Bigg) \; \Rightarrow \\
\nonumber \qquad \Bigg(\bigwedge_{d \in \{2, 4, \dots, k+1\}} \X \neg c_{\pi_d} 
 \; \wedge \\
\qquad ~~~~\F\bigg[\bigvee_{d \in [1,k+1]} \Big(\big(q^{d}_{\pi_d} 
\Leftrightarrow q_{\pi'}^d\big) \, \wedge \\
\qquad~~~~~~~~~~~~~~~ \big((p_{\pi'} \wedge p_{\pi_d}) \; 
\vee \; (\bar{p}_{\pi'} \wedge \bar{p}_{\pi_d})\big)\Big) 
\bigg]\Bigg)
\end{array}\]

Intuitively, $\varphi_\map$
expresses the following: for all the clause traces $\pi'$ and all traces that 
valuate universally quantified variables ($\pi_1, \pi_3, \ldots$), there exist 
traces evaluating the existentially quantified variables ($\pi_2, \pi_4, 
\ldots$), where either $p$ or $\bar{p}$ eventually matches its counterpart 
position in the clause trace $\pi'$. The dependencies between the trace 
quantifiers for the valuation of the variables match the dependencies in the 
quantified
Boolean formula. A special case are the outermost existential variables in the 
QBF. The corresponding trace quantifier (for $\pi_{k+1}$) is universal, rather 
than existential. As a result, the formula has only $k-1$ alternations.
We allow synthesis to reduce the valuations for the outermost existential 
variables to a single valuation. Hence, universal and existential 
quantification 
is the same. 
\end{proof}

\medskip

Finally, Theorem~\ref{thrm:system-acyc-EAk1} implies that the synthesis 
problem for acyclic plants and HyperLTL formulas with
an arbitrary number of quantifiers is in \comp{PSPACE}.

\begin{corollary}
\label{cor:sys-acyclic-hltl}
\CS{{HyperLTL}}{\mbox{acyclic}} is in \comp{PSPACE} in the size of the 
plant.  
 
\end{corollary}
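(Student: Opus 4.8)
The plan is to derive this from Theorem~\ref{thrm:system-acyc-EAk1} (together with Theorems~\ref{thrm:sys-acyc-e},~\ref{thrm:sys-acyc-a}, and~\ref{thrm:sys-acyc-ea} for the low-quantifier cases) plus the textbook inclusion of the polynomial hierarchy in \comp{PSPACE}. The key point is that we measure complexity in the size of the plant, so the HyperLTL formula $\varphi$ — and in particular its number $k$ of quantifier alternations — is a fixed constant. First I would case-split on the shape of the quantifier prefix of $\varphi$: the purely existential case is \comp{NL} by Theorem~\ref{thrm:sys-acyc-e}, the purely universal case is \comp{NP} by Theorem~\ref{thrm:sys-acyc-a}, the $\mbox{E}^*\mbox{A}^*$ case is in $\mathsf{\Sigma^p_2}$ by Theorem~\ref{thrm:sys-acyc-ea}, and the general case with $k$ alternations sits in $\mathsf{\Sigma^p_k}$ or $\mathsf{\Sigma^p_{k+1}}$ by Theorem~\ref{thrm:system-acyc-EAk1}, depending on whether the leading quantifier is existential or universal. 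In every case the synthesis problem lies at a fixed level $\mathsf{\Sigma^p_j}$ with $j \le k+1$.

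The second step is then immediate: $\mathsf{PH} = \bigcup_j \mathsf{\Sigma^p_j} \subseteq \mathsf{PSPACE}$, and since $j$ is a constant depending only on the (fixed) formula, $\CS{HyperLTL}{\mbox{acyclic}}$ is in \comp{PSPACE} in the size of the plant. Alternatively — and perhaps more transparently — one can describe the PSPACE algorithm directly: because the plant is acyclic, every trace is the labeling of a path of length at most $|\States|$ followed by a terminal self-loop, hence is describable in polynomial space; the procedure existentially guesses the controller $\krip'$ (a choice $\cont' \subseteq \cont$ with $\uncont' = \uncont$) and then model-checks $\krip' \models \varphi$ by processing the quantifier prefix $Q_1 \pi_1 \cdots Q_n \pi_n$ in order, at each step guessing (for $\exists$) or universally branching over (for $\forall$) a trace of $\krip'$, and finally evaluating the LTL matrix $\psi$ on the resulting tuple of finite traces in polynomial time; this is an alternating polynomial-time computation, and $\mathsf{APTIME} = \mathsf{PSPACE}$.

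There is no real obstacle here — the corollary is essentially a repackaging of Theorem~\ref{thrm:system-acyc-EAk1}. The one subtlety worth flagging in the writeup is that the \comp{PSPACE} bound is not uniform in $k$: the space usage (and the degree of the polynomial) grows with the number of quantifier alternations, so the statement is that for each fixed HyperLTL formula the synthesis problem is in \comp{PSPACE} in the plant size, which is precisely what the corollary asserts. Since Corollary~\ref{cor:sys-general-hltl} already shows the general-graph version is \comp{NONELEMENTARY}, it would also be worth a one-line remark contrasting the two, underscoring the role of acyclicity; but no additional argument is needed beyond the above.
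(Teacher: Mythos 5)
Your proposal is correct and follows the same route as the paper: the corollary is stated there as an immediate consequence of Theorem~\ref{thrm:system-acyc-EAk1}, since for a fixed formula the problem sits at a fixed level of the polynomial hierarchy and $\mathsf{PH} \subseteq \mathsf{PSPACE}$. Your additional remarks (the direct alternating-polynomial-time algorithm and the non-uniformity in $k$) are sound but not needed; the paper gives no further argument.
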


\section{Complexity of Controller Synthesis for General Graphs}
\label{sec:general}

In this section, we investigate the complexity of the controller synthesis 
problem for general graphs. We again begin with the alternation-free fragment 
and then continue with formulas with quantifier alternation.

\subsection{The Alternation-free Fragment}

We start with the existential fragment. As for acyclic graphs, 
the controller synthesis problem is already solved by model checking.

\begin{theorem}
\label{thm:sys-general-e}
\CS{$\mbox{E}^*$-HyperLTL}{\mbox{general}} is \comp{NL-complete} in the size of 
the plant.
\end{theorem}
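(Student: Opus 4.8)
The plan is to reuse the observation already exploited in Theorem~\ref{thrm:sys-acyc-e}: for purely existential formulas the controller synthesis problem coincides with model checking, and this coincidence does not depend on the shape of the frame. First I would establish the equivalence in both directions. If $\krip \models \varphi$, then $\krip' := \krip$ is itself a witness for the synthesis problem, since it trivially satisfies $\States' = \States$, $\cont' \subseteq \cont$, $\uncont' = \uncont$, $L' = L$, and is deadlock-free by assumption. Conversely, if some $\krip'$ obtained by pruning controllable transitions satisfies $\varphi$, then every path of $\krip'$ uses transitions in $\cont' \cup \uncont' \subseteq \cont \cup \uncont$ and is therefore also a path of $\krip$, so $\Trace(\krip') \subseteq \Trace(\krip)$; as $\varphi$ is existential, the witness traces for its existential quantifiers in $\krip'$ are traces of $\krip$, hence $\krip \models \varphi$. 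Thus the synthesis instance is positive iff $\krip \models \varphi$.

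For membership in \comp{NL}, I would then invoke the known complexity of HyperLTL model checking: checking an alternation-free existential HyperLTL formula against a general plant amounts to checking a plain LTL property over the $n$-fold self-composition of the plant (of polynomial size for a fixed number $n$ of quantifiers), which is \comp{NL-complete} in the size of the plant~\cite{bf18}. By the equivalence above this upper bound transfers verbatim to \CS{$\mbox{E}^*$-HyperLTL}{\mbox{general}}.

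For \comp{NL}-hardness I would again appeal to the equivalence, which transfers the \comp{NL}-hardness of the model checking problem; alternatively, a direct reduction from directed graph reachability (STCON) works: given a digraph $G$ with designated vertices $s,t$, build a plant whose frame is $G$ with initial state $s$, add self-loops on sink vertices so that Definition~\ref{def:kripke} is met, label $t$ with a fresh proposition $t$, and observe that the synthesis instance with formula $\exists \pi.\, \F t_\pi$ is positive iff $t$ is reachable from $s$.

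I do not expect a serious obstacle. The only point needing a little care is that the witness plant (or the STCON gadget) must be a legal plant, i.e.\ deadlock-free, which is handled by the self-loop convention; everything else is a direct corollary of the existential/model-checking collapse together with the already-available complexity of HyperLTL model checking over general structures.
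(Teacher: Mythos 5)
Your proposal is correct and matches the paper's argument: the paper likewise observes that for purely existential formulas controller synthesis collapses to model checking (the original plant is the witness, and pruning only shrinks the trace set), and then invokes the known \comp{NL-completeness} of existential HyperLTL model checking on general graphs. Your additional direct STCON reduction for hardness is a fine, equally valid substitute for citing the model-checking lower bound, but does not change the essence of the proof.
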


\begin{proof}
Analogously to the proof of Theorem~\ref{thrm:sys-acyc-e}, we note that, for 
existential formulas, the synthesis problem is equivalent to the model checking 
problem. A given plant satisfies the formula if and only if it has a 
solution to the synthesis problem. If the formula is satisfied, then the 
solution is simply the original plant. Since the model checking problem for 
existential formulas for general graphs is \comp{NL}-complete~\cite{frs15}, the 
same holds for the synthesis problem.
\end{proof}

Similar to tree-shaped and acyclic graphs, the synthesis problem for the 
universal fragment is also \comp{NP-complete}. 

\begin{theorem}
\CS{$\mbox{A}^+$-HyperLTL}{\mbox{general}} is \comp{NP-complete} in the size 
of the plant.
\label{thm:sys-general-a}
\end{theorem}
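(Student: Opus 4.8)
The plan is to prove membership in \comp{NP} and \comp{NP}-hardness separately. Membership follows the guess-and-check template already used in Theorems~\ref{thrm:sys-tree-aae-upper} and~\ref{thrm:sys-acyc-a}, while hardness is inherited, essentially for free, from the tree-shaped construction of Theorem~\ref{thrm:sys-tree-aa-lower}.

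For the upper bound, I would first observe that a candidate controller $\krip' = \ktupleprime$ is completely determined by the retained set $\cont' \subseteq \cont$ of controllable transitions, so it can be guessed with a certificate of size polynomial in the plant. The only well-formedness obligation is that $\krip'$ be deadlock-free, i.e., that every state of $\States$ still has an outgoing edge in $\cont' \cup \uncont$; this is checked in polynomial time. It then remains to verify $\krip' \models \varphi$ for the purely universal formula $\varphi = \forall \pi_1 \cdots \forall \pi_n.\, \psi$. Since $n$ is a constant fixed by $\varphi$, the $n$-fold self-composition of $\krip'$ has $|\States|^n$ states, i.e., polynomially many in the plant, and model checking the \LTL body $\psi$ on this product amounts to reachability plus accepting-cycle detection, which is in \comp{NL} in the size of the product and hence polynomial in the size of the plant~\cite{bf18,frs15}. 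So the verification runs in polynomial time, and the synthesis problem is in \comp{NP}.

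For the lower bound, I would note that the reduction from 3-SAT constructed in the proof of Theorem~\ref{thrm:sys-tree-aa-lower} yields a \emph{tree-shaped} plant together with the formula $\forall \pi_1. \forall \pi_2.\, \G(\neg \pos_{\pi_1} \vee \neg \negt_{\pi_2})$, which lies in the \comp{A$^+$} fragment. Every tree-shaped plant is in particular a general plant, so the very same family of instances witnesses \comp{NP}-hardness of \CS{$\mbox{A}^+$-HyperLTL}{\mbox{general}}, exactly as in the acyclic case (Theorem~\ref{thrm:sys-acyc-a}). Together with the \comp{NP} membership above, this gives \comp{NP}-completeness. The only step requiring any real care is the claim that model checking a fixed universal HyperLTL formula over a general, possibly cyclic, plant is polynomial in the plant; this rests precisely on the self-composition being of polynomial size for a constant number of quantifiers and on \LTL model checking being in \comp{NL} in the size of the composed structure. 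The remaining points — the exact certificate format for $\cont'$, the deadlock test, and the observation that the hardness reduction transfers unchanged from trees to general graphs — are routine.
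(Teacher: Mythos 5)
Your proof is correct. The membership argument is essentially the paper's: guess $\cont'\subseteq\cont$, check deadlock-freedom, and verify the fixed universal formula in polynomial time; your elaboration via the polynomial-size $n$-fold self-composition and the fact that \LTL model checking is \comp{NL} in the size of the structure for a fixed formula is exactly the standard justification the paper leaves implicit. The lower bound, however, takes a different route. The paper derives \comp{NP}-hardness from the \comp{NP}-hardness of controller synthesis for plain \LTL on general plants~\cite{bek09}, i.e., already for a \emph{single} universal quantifier, which necessarily uses a cyclic plant (on trees and acyclic graphs the single-quantifier case is in \comp{L} and \comp{NL}, respectively, by Theorems~\ref{thrm:sys-tree-ea} and~\ref{thrm:sys-acyc-e}). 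You instead inherit hardness from the $\forall\forall$ 3-SAT reduction of Theorem~\ref{thrm:sys-tree-aa-lower}, observing that a tree-shaped plant is in particular a general plant; this is valid, self-contained within the paper, and mirrors exactly how the paper itself handles the acyclic case in Theorem~\ref{thrm:sys-acyc-a}. The trade-off is that your argument needs two universal quantifiers, whereas the paper's citation pins the hardness already on the $\mbox{A}^1$ (pure \LTL) subfragment; both suffice for \comp{NP}-completeness of $\mbox{A}^+$ as stated.
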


\begin{proof} For membership in \comp{NP}, we nondeterministically guess a 
solution to the synthesis problem, and verify the correctness of the 
universally quantified HyperLTL formula against the solution in polynomial time 
in the size of the plant. \comp{NP-hardness} follows from the 
\comp{NP-hardness} of the synthesis problem for LTL~\cite{bek09}.
%
%
\end{proof}

\subsection{Formulas with Quantifier Alternation}

Next, we consider formulas where the number of quantifier
alternations is 
bounded by a constant $k$. We show that changing
the frame structure from 
acyclic to general graphs results in a
significant increase in complexity (see 
Table~\ref{tab:system}).

\begin{theorem}
  \label{thrm:system-general-EAk}
\CS{$\mbox{E}^*\mbox{A}^*\mbox{-HyperLTL}$}{\mbox{general}} is in
\comp{PSPACE} in the size of the plant.
\CS{$\mbox{A}^*\mbox{E}^*\mbox{-HyperLTL}$}{\mbox{general}} is
\comp{PSPACE}-complete in the size of the plant.
  For $k \geq 2$,
\CS{$\mbox{(EA)}^k\mbox{-HyperLTL}$}{\mbox{general}} and
\CS{$\mbox{(AE)}^k\mbox{-HyperLTL}$}{\mbox{general}}
are
\comp{$(k{-}1)$-EXPSPACE}-complete in the size of the plant.
\end{theorem}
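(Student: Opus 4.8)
The plan is to reduce all four parts of the statement to the known system‑size complexity of HyperLTL \emph{model checking} over general graphs, which is \comp{PSPACE-complete} for formulas with one quantifier alternation and \comp{$(k{-}1)$-EXPSPACE-complete} for formulas with $k\geq 2$ alternations~\cite{bf18,frs15}. The synthesis‑side observation I would start from is that a controller is completely determined by the set $\cont'\subseteq\cont$ of controllable transitions it keeps, and this set is described by only $O(|\States|^2)$ bits. Hence the algorithm is: deterministically enumerate every candidate $\cont'\subseteq\cont$ with an $O(|\States|^2)$‑bit counter, discard a candidate if $\krip'=\langle\States,s_\init,\cont',\uncont,L\rangle$ has a deadlock state, and otherwise run the HyperLTL model checker on $\krip'$; accept iff some candidate passes. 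The enumeration uses only polynomial extra workspace, which is dominated by the model‑checking workspace in every case and is reused across candidates. This places \CS{$\mbox{E}^*\mbox{A}^*\mbox{-HyperLTL}$}{\mbox{general}} and \CS{$\mbox{A}^*\mbox{E}^*\mbox{-HyperLTL}$}{\mbox{general}} in \comp{PSPACE}, and \CS{$\mbox{(EA)}^k\mbox{-HyperLTL}$}{\mbox{general}} and \CS{$\mbox{(AE)}^k\mbox{-HyperLTL}$}{\mbox{general}} in \comp{$(k{-}1)$-EXPSPACE} --- all of the claimed upper bounds.

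For the lower bounds I would use the degenerate synthesis instances in which $\cont=\emptyset$. Then the only plant $\krip'$ with $\States'=\States$, $\cont'\subseteq\cont$, $\uncont'=\uncont$, and $L'=L$ is $\krip$ itself, so the synthesis question is literally ``$\krip\models\varphi$''. This is a polynomial‑time reduction from HyperLTL model checking that preserves the quantifier fragment and keeps the frame a general graph, so the synthesis problem inherits the model‑checking hardness: \comp{PSPACE-hardness} for the one‑alternation fragment $\mbox{A}^*\mbox{E}^*$, and \comp{$(k{-}1)$-EXPSPACE-hardness} for $\mbox{(EA)}^k$ and $\mbox{(AE)}^k$ with $k\geq 2$. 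Together with the upper bounds, this gives the two completeness claims. There is deliberately no matching lower bound for $\mbox{E}^*\mbox{A}^*$: model checking $\exists^*\forall^*$ formulas over general graphs is not known to be \comp{PSPACE-hard} in the system size, which is exactly why only membership in \comp{PSPACE} is asserted for that fragment.

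I would add a remark explaining why leading existentials do not make $\mbox{(EA)}^k$ cheaper than $\mbox{(AE)}^k$ here, in contrast with the acyclic case (Theorem~\ref{thrm:system-acyc-EAk1}). Over acyclic plants a witness for an outermost existential trace quantifier has length at most $|\States|$, so it can be guessed in polynomial time and ``absorbed'' into the controller‑guessing phase, dropping the alternation count of the residual model‑checking instance. Over general graphs a witness trace is in general only representable as an ultimately periodic (lasso) trace whose period can be exponential in $|\States|$ --- it is bounded by the size of the product of $\krip$ with the B\"uchi automaton for the matrix of $\varphi$ --- so it cannot be written down inside a polynomial guessing budget, and those quantifiers must stay in the formula handed to the model checker. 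The residual instance therefore still has $k$ alternations whichever way the leading block goes, and both fragments end at \comp{$(k{-}1)$-EXPSPACE}.

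The hard part will not be the synthesis‑specific reasoning --- which is just the ``guess $\cont'$, then model check'' argument above --- but importing the model‑checking complexities at the right granularity. For the upper bound this is the alternating‑automata construction in which each quantifier alternation triggers a B\"uchi complementation, so the automaton built for a $k$‑alternation formula has size a tower of exponentials of height $k$ in $|\States|$ and its (non)emptiness/universality test runs in \comp{$(k{-}1)$-EXPSPACE}. For the lower bound it is a system‑size hardness reduction: encoding the configurations of a $(k{-}1)$‑fold exponential‑space Turing machine (equivalently, a correspondingly large tiling instance) into a plant whose traces are forced to be compared through a fixed $k$‑alternation HyperLTL formula, mirroring the acyclic and repair reductions of~\cite{bf18,bf19}. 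If these model‑checking bounds are not available off the shelf in precisely the per‑fragment, system‑size form needed, they must be reproved in this way.
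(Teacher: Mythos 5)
Your proposal is correct and takes essentially the same route as the paper: guess (or deterministically enumerate) the subset of controllable transitions, check it is deadlock-free, and invoke the known system-size model-checking bounds for the resulting fragment; for hardness, make the instance rigid (all transitions uncontrollable, i.e.\ $\cont=\emptyset$) so that synthesis degenerates to model checking and inherits its \comp{PSPACE}- resp.\ \comp{$(k{-}1)$-EXPSPACE}-hardness. The additional remarks on lasso witnesses and on why only membership is claimed for $\mbox{E}^*\mbox{A}^*$ are consistent with, but go beyond, what the paper records.
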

 
\begin{proof}
The claimed complexities are those of the model checking 
problem~\cite{markus}. For the upper bound, we guess, in \comp{PSPACE},
a solution to the control problem and then verify, using the
model checking algorithm for the considered fragment of HyperLTL, that the
solution satisfies the HyperLTL formula.

For the lower bound, we reduce the model checking problem to the
controller synthesis problem by identifying each transition of the
given plant as an uncontrollable transition of the plant. In this 
way, the controller synthesis cannot modify the plant, and the synthesis 
succeeds iff the given plant in the model checking problem already 
satisfies the HyperLTL formula.
\newcommand\toberemoved[1]{}
\toberemoved{
  We prove that the synthesis problem has the same complexity as the model 
checking problem.
  For the upper bound, we enumerate, in \comp{PSPACE}, all possible 
solutions to the synthesis problem, and then verify against the HyperLTL 
formula.

  For the lower bound, we modify the plant and the HyperLTL formula 
such that the only possible solution to synthesis is the unchanged plant.
  After the modification, the repair problem thus has the same result as 
the 
model checking problem.
  The idea of the modification is to assign numbers to the successors of each 
state. We add extra states such that the
  traces that originate from these states correspond to all possible number 
sequences. Finally, the HyperLTL formula states
  that for each such number sequence there exists a corresponding trace in the 
original plant.
  A technical difficulty is that the HyperLTL formula needs to be fixed for all 
plants, but different plants may differ in their branching degree. For 
this reason we first 
(Step 1) transform the given Kripke structure and HyperLTL
  formula into an equivalent problem where every state has precisely two 
successors; afterwards (Step 2) we carry out
  the modification that ensures that the repair cannot remove any edges.

  \emph{Step~1: Fixing the branching degree.} We first translate the given 
Kripke structure $\krip = \ktuple$ into a Kripke structure
  $\krip'=\langle S', s_\init', \trans', L' \rangle$ where every state has at 
most two successors; afterwards we construct another Kripke structure $\krip''$
  where every state has exactly two successors. Let $d(s) = | \{s' \in S \mid 
(s,s') \in \trans\}$ be
  the branching degree of state $s \in S$ and let $d = \max_{s\in S} d(s)$ be 
the branching degree. We add a marker $m$ to identify the states of the 
original 
Kripke structure. \\

  \begin{itemize}
  \item $\AP' = \AP \cup \{m\}$.
  \item $\States' = S \cup \{ (s,i) \mid s \in \States, 2 \leq i \leq d(s) \}$
  \item $s_\init' = s_\init$
  \item $\trans' = \bigcup_{s\in S}\{ (s,s_1'), (s,(s,2)),
  ((s,2),s_2'),((s,2),s_2'),$\\ \mbox{\qquad} $((s,2),(s,3)), ((s,3), s_3'), 
\ldots,$\\ \mbox{\qquad} $((s,d(s)-1), (s,s_{d(s)-1})), (s,d(s)-1), s_{d(s)}') 
\}$
  \item  $L'(s) = L'(s) \cup \{m\}$ for $s \in \States$ and\\ $L(s,i)= 
\emptyset$ for $(s,i) \in \States'$
  \end{itemize}
  
Let $\varphi$ be a given HyperLTL formula over $\AP$. We define a new HyperLTL 
formula $\varphi'$ over $\AP'$ such that $\krip$ satisfies $\varphi$ iff 
$\krip'$ satisfies $\varphi'$.
We transform $\varphi$ inductively as follows:
\begin{itemize}
\item if $\varphi = \X \psi$, then $\varphi' = (\neg m)\, \mathcal U\, (m 
\wedge 
\psi')$, 
\item if $\varphi = \psi_1\, \mathcal U\, \psi_2$, then $\varphi' = (m 
\rightarrow \psi')\, \mathcal U\, (m \wedge \psi_2)$,
\item if $\varphi = \exists \pi.\ \psi$ then $\varphi' = \exists \pi.\ \psi'$
\item if $\varphi = \forall \pi.\ \psi$ then $\varphi' = \forall \pi.\ \psi'$
\item if $\varphi = \neg \psi$ then $\varphi' = \neg \psi'$
\item if $\varphi = \psi_1 \vee \psi_2$ then $\varphi' = \psi_1' \vee \psi_2'$
\end{itemize}
where $\psi',\psi_1',\psi_2'$ are the transformations of $\psi$, $\psi_1$, and 
$\psi_2$, respectively.

The Kripke structure $\krip'$ may still contain states that have only one 
successor. We construct a Kripke structure $\krip''=\langle S'', s_\init'', 
\trans'', L'' \rangle$ that has the same traces over $\AP'$ as $\krip'$, and, 
hence, satisfies the same HyperLTL formulas over $\AP'$, but only has states 
with exactly two successors. We furthermore distinguish the two successors with 
a fresh proposition $c$. 
\begin{itemize}
\item $\AP'' = \AP' \cup \{c\}$. 
\item $\States'' = (S' \times \{0,1\})$
\item $s_\init'' = (s_\init',0)$
\item $\trans'' = \bigcup_{(s,i)\in S''}\{ ((s,i), (s',0)), ((s,i),(s'',1)) 
\mid 
(s,s'), (s,s'') \in \trans$ and $s'<s''$ if $d(s)=2$ and $s'=s''$ if $d(s)=1 
\}$ 
 
\item $L''(s,0) = L'(s)$ and\\ $L''(s,1) = L'(s) \cup \{c\}$ for $s \in 
\States'$
\end{itemize}
where $<$ is an arbitrary order on the states in $\States'$.

\emph{Step~2: Protecting the Kripke structure from repair.} As described 
above, we construct another Kripke structure $\krip''=\langle S'', s_\init''\, 
\trans''\, L''\ \rangle$ by adding fresh states that generate all sequences of 
successor numbers, i.e., all bitsequences of $c$. We again use a marker 
proposition $m''$ to mark the states from $S''$.
 \begin{itemize}
  \item $\AP''' = \AP'' \cup \{m''\}$.
  \item $\States''' = S'' \cup \{ 0,1 \}$
  \item $s_\init''' = s_\init''$
  \item $\trans''' = \trans'' \cup \{ (s_\init''', 0), (s_\init''', 1), (0,1), 
(0,0), (1,0), 1,1) \}$
  \item $L'''(s) = L''(s) \cup \{m''\}$ for $s \in \States''$, 
$L(0)=\emptyset$, 
and $L(1) = \{c\}$.
  \end{itemize}
 Finally, we modify the HyperLTL formula $\varphi'$ so that $\varphi'$ only 
refers to traces in $S''$ and that no transitions can be removed by the 
repair.
 Let $\varphi' = Q_1 \pi_1 Q_2 \pi_2 \ldots Q_m \pi_m .\ \psi$, where $\psi$ is 
quantifier-free. Let $E$ be the set of indices $i$ such that $Q_i$ is 
existential, and
 $A$ be the set of indices $i$ such that $Q_i$ is universal.
 We define \[
 \varphi''' = Q_1 \pi_1 Q_2 \pi_2 \ldots Q_m \pi_m .\ (\bigwedge_{m \in E} \X 
m''_{\pi_i}) \wedge ( (\bigwedge_{m \in E} \X m''_{\pi_i}) \rightarrow \psi)
 \]
 The final HyperLTL formula $\varphi''''=\varphi''' \wedge \varphi_{01} \wedge 
\varphi_{\krip}$ is the conjunction of $\varphi'''$ and two constraints that 
protect the Kripke structure from repair.
 \begin{itemize}
 \item $\varphi_{01}$ protects the transitions to states $0$ and $1$:
   \[ \begin{array}{l}
     \varphi_{01} = \exists \pi . \exists \pi' .  \exists \pi''  . \exists 
\pi''' .\\
     \qquad (\X (\neg m_\pi'' \wedge \neg c \wedge \X \neg c)) \wedge\\
     \qquad (\X (\neg m_{\pi'}'' \wedge \neg c \wedge \X c)) \wedge\\
     \qquad (\X (\neg m_{\pi''}'' \wedge c \wedge \X  c)) \wedge\\
     \qquad (\X (\neg m_{\pi'''}'' \wedge c \wedge \X \neg c))
     \end{array}\]

\item $\varphi_{\krip}$ protects all transitions in $\krip''$:
  \[ \begin{array}{l} \varphi_{\krip} = \forall \pi \exists \pi'. \ (\X\G \neg 
m''_\pi) \rightarrow\\
    \qquad \X\G (m''_{\pi'} \wedge c_\pi \leftrightarrow c_{\pi'})
    \end{array}\] 
    \end{itemize}
 
 } 
\end{proof}

\medskip

Finally, Theorem~\ref{thrm:system-general-EAk} implies that the repair 
problem
for general plants and HyperLTL formulas with
an arbitrary 
number of quantifiers is in \comp{NONELEMENTARY}.

\begin{corollary}
\label{cor:sys-general-hltl}
\CS{{HyperLTL}}{\mbox{general}} is \comp{NONELEMENTARY} in the size of the 
plant. 
\end{corollary}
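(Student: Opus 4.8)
The plan is to read both directions off Theorem~\ref{thrm:system-general-EAk}, which already determines the complexity of controller synthesis on general plants for every fixed number $k$ of quantifier alternations.

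For the upper bound, I would note that any HyperLTL sentence $\varphi$ has a fixed quantifier prefix, hence a fixed alternation depth $k$, and then invoke the membership part of Theorem~\ref{thrm:system-general-EAk}: the synthesis problem runs in \comp{PSPACE} in the size of the plant when $k \le 1$, and in \comp{$(k{-}1)$-EXPSPACE} in the size of the plant when $k \ge 2$. Gluing these procedures together (first compute the alternation depth of the input formula, then dispatch to the appropriate algorithm) gives a single decision procedure for \CS{HyperLTL}{\mbox{general}} whose resource usage, as a function of the whole input, is bounded by a tower of exponentials whose height is governed by the alternation depth of $\varphi$ --- a computable, non-elementary function. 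In particular the problem is decidable and lies in \comp{NONELEMENTARY}.

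For the lower bound, I would invoke the completeness half of the same theorem: for each $k \ge 2$, \CS{$\mbox{(AE)}^k\mbox{-HyperLTL}$}{\mbox{general}} is \comp{$(k{-}1)$-EXPSPACE}-hard, and each of these fragments is contained in full HyperLTL. Since, ranged over all $k$, the classes $(k{-}1)$-\comp{EXPSPACE} exhaust the elementary hierarchy, no elementary function in the plant size can bound the cost of any algorithm for \CS{HyperLTL}{\mbox{general}}; the problem therefore lies outside \comp{ELEMENTARY}, and together with the upper bound this yields the stated \comp{NONELEMENTARY} classification.

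There is no genuine obstacle here --- the corollary is immediate from Theorem~\ref{thrm:system-general-EAk} --- but I would be careful about the precise reading of ``\comp{NONELEMENTARY} in the size of the plant'': for any single fixed formula the cost is in fact elementary (indeed $(k{-}1)$-\comp{EXPSPACE}) in the plant, and the non-elementary blow-up is the tower whose height depends on the formula's quantifier alternations, so the claim is that no bound uniform over all formulas can be elementary. I would also remark that this mirrors HyperLTL model checking on general graphs, from which Theorem~\ref{thrm:system-general-EAk} inherits its bounds, and that applying the same argument to Theorem~\ref{thrm:system-acyc-EAk1} recovers the \comp{PSPACE} bound of Corollary~\ref{cor:sys-acyclic-hltl}.
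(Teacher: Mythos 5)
Your proposal is correct and matches the paper exactly: the corollary is stated there as an immediate consequence of Theorem~\ref{thrm:system-general-EAk}, with the upper bound from combining the $(k{-}1)$-\comp{EXPSPACE} procedures over all alternation depths and the lower bound from the completeness of each bounded-alternation fragment. Your added caveat about the reading of ``\comp{NONELEMENTARY} in the size of the plant'' (elementary for each fixed formula, non-elementary only uniformly over formulas) is a fair and accurate gloss on what the paper intends.
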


\section{Related Work}
\label{sec:related}


There has been a lot of recent progress in automatically
{verifying}~\cite{frs15,fmsz17,fht18,cfst19}
and {monitoring}~\cite{ab16,fhst19,bsb17,bss18,fhst18,sssb19,hst19} 
HyperLTL specifications. HyperLTL is also supported by a growing set of 
tools, including the model checker MCHyper~\cite{frs15,cfst19}, the  
satisfiability checkers EAHyper~\cite{fhs17} and MGHyper~\cite{fhh18}, and the 
runtime monitoring tool RVHyper~\cite{fhst18}.

The closest work to the study in this paper is the analysis of the  
{\em program repair} problem for HyperLTL~\cite{bf19}. The repair problem is to find a 
subset of traces of a Kripke structure that satisfies a given HyperLTL formula. 
Thus, the repair problem is similar to the controller synthesis problem studied 
in this paper. In both problems, the goal is to prune the set of transitions of the 
given plant or model. However, in program repair, all transitions are 
controllable, whereas in controller synthesis the pruning cannot be applied to 
uncontrollable transitions. We draw the following comparison and contrast 
between the results in~\cite{bf19} and this paper:

\begin{itemize}

\item The general positive result of this paper is that although 
controller synthesis is typically perceived as a more difficult problem than 
program repair (due to the existence of uncontrollable transitions), our study, 
summarized in Table~\ref{tab:system}, shows that the complexity of controller 
synthesis and program repair for HyperLTL remain pretty close. This result 
may appear counterintuitive. For some cases, there is a simple explanation
why the complexities are similar. For example, for general graphs, the 
complexity is dominated by the model checking complexity. Both problems can be 
solved by first guessing a solution and then verifying it. The complexity of 
the model checking problem (which is the dominating factor) is the same for 
both problems and as a result, the complexity of program repair and controller 
synthesis is the same (\comp{NONELEMENTARY}).
However, this is not always the case. 
For example, for acyclic graphs, the fact that guessing the controller is more difficult
than guessing the repair leads to a higher complexity (within the polynomial hierarchy).
Another interesting observation is the effect of the universal quantifiers in the $\forall\forall$ fragment.  
While the repair problem for the $\forall\forall$ fragment is 
\comp{L-complete} for tree-shaped graphs, the problem becomes \comp{NP-complete} for the 
controller synthesis problem. Also, while the repair problem for the 
$\forall\forall$ fragment is \comp{NL-complete} for acyclic graphs, 
it becomes \comp{NP-complete} for the controller synthesis problem. This is 
significant, because many important security policies such as certain types of
noninterference~\cite{gm82} and observational determinism~\cite{zm03} fit in 
this fragment.

\item Although the proof techniques in this paper are similar to those in~\cite{bf19}, leveraging the existence of uncontrollable 
transitions has made our lower bound proofs more elegant. In particular, the 
proofs in~\cite{bf19} need to incorporate complex constraints in the HyperLTL 
formulas 
to make sure that during reductions, parts of the state space related to 
clauses of the input (e.g., SAT or QBF) formulas are not removed. Here, we 
mimic this in a more elegant way by using uncontrollable transitions, which cannot be removed during synthesis.

\end{itemize}

The controller synthesis problem studied in this paper is also related to classic \emph{supervisory control}, where, 
for a given plant, a supervisor is constructed that selects an appropriate 
subset of the plant's controllable actions to ensure that the resulting 
behavior is safe~\cite{tw87,l91,jk06}.

Directly related to the controller synthesis problem studied in this paper is 
the {\em satisfiability}. The satisfiability problem for HyperLTL was shown to 
be decidable for the $\exists^*\forall^*$ fragment and for any fragment that 
includes a $\forall\exists$ quantifier alternation~\cite{fh16}. The hierarchy of 
hyperlogics beyond HyperLTL has been studied in~\cite{cfhh19}.

The general \emph{synthesis} problem differs from controller synthesis in that the solutions are not limited to the state graph of the plant.
For HyperLTL, synthesis was shown to be undecidable in 
general, and decidable for the $\exists^*$ and $\exists^*\forall$ fragments~\cite{DBLP:journals/acta/FinkbeinerHLST20}. 
While the synthesis problem becomes, in general, undecidable as soon as there 
are two universal quantifiers, there is a special class of universal 
specifications, called the linear $\forall^*$-fragment, which is still 
decidable. The linear $\forall^*$-fragment corresponds to the 
decidable \emph{distributed synthesis} problems~\cite{fs05}. The \emph{bounded 
synthesis} problem~\cite{DBLP:journals/acta/FinkbeinerHLST20,cfst19} considers only systems up to a given bound on the number of 
states. Bounded synthesis has been 
successfully applied to various benchmarks including the dining 
cryptographers~\cite{c85}.

The problem of \emph{model checking} hyperproperties for tree-shaped and acyclic graphs 
was studied in~\cite{bf18}. Earlier, a similar study of the impact of 
structural restrictions on the complexity of the model checking problem has 
also been carried out for LTL~\cite{kb11}.

Our motivating example draws from the substantial literature on specifying and
verifying \emph{non-repudiation protocols}~\cite{DBLP:conf/dbsec/EzhilchelvanS03,DBLP:conf/dsn/LiuNJ00,10.1007/3-540-44685-0_37,10.1007/978-3-642-29963-6_10}. In particular, \cite{10.1007/978-3-642-29963-6_10} discusses
the need for the consideration of incomplete information. We are not aware, however, of previous work on the automatic synthesis of trusted third parties
for such protocols.





\section{Conclusion and Future Work}
\label{sec:conclusion}

We have presented a rigorous classification of the complexity of the 
{\em controller synthesis} problem for {\em hyperproperties} expressed in 
HyperLTL. We considered general, acyclic, and tree-shaped plants. We showed 
that for trees, 
the complexity of the synthesis problem in the size of the plant does not go 
beyond \comp{NP}. While the problem is complete for \comp{L} for the 
alternation-free existential fragment, it is complete for \comp{NP} for the
alternation-free universal fragment. The problem is complete for \comp{P} for 
the fragment with only one quantifier alternation, where the leading 
quantifier is universal. For acyclic plants, the complexity is in 
\comp{PSPACE} (in the level of the polynomial hierarchy that corresponds to the 
number of quantifier alternations). The problem is \comp{NL-complete} for the 
alternation-free existential fragment. Similar to trees, the problem is 
\comp{NP-complete} for the alternation-free universal fragment. For general 
graphs, the problem is \comp{NONELEMENTARY} for an arbitrary number of 
quantifier alternations. For a bounded number $k$ of alternations, the problem 
is $(k{-}1)$-\comp{EXPSPACE-complete}.

It is interesting to compare controller synthesis to
program repair~\cite{bf18}. With the 
notable exception the universal fragment of HyperLTL for trees and acyclic 
graphs, the complexities of controller synthesis and program repair are largely 
aligned. This is mainly due to the fact that synthesizing a controller 
involves computing a subset of the controllable transitions such that the 
specification is satisfied. This is also the case for program repair, except that all transitions of the system are controllable.  

As for future work, we plan to develop efficient controller synthesis algorithms for different fragments of HyperLTL along the lines of QBF-based synthesis methods for hyperproperties~\cite{DBLP:journals/acta/FinkbeinerHLST20,cfst19}. It would furthermore be interesting 
to see if the differences we observed for HyperLTL carry over to other 
hyperlogics beyond HyperLTL (cf.~\cite{cfhh19,cfkmrs14,FinkbeinerMSZ-CCS17,ab18}).

\section*{Acknowledgments}

This work is sponsored in part by the United States NSF SaTC Award 1813388. It 
was also supported by the German Research Foundation (DFG) as part of the 
Collaborative Research Center “Methods and Tools for Understanding and 
Controlling Privacy” (CRC 1223) and the Collaborative Research Center 
“Foundations of Perspicuous Software Systems” (TRR 248, 389792660), and by the 
European Research Council (ERC) Grant OSARES (No. 683300).

\bibliographystyle{IEEEtran}
\bibliography{bibliography}

\end{document}